\documentclass[12pt,a4paper]{article}%
\usepackage{amsmath}
\usepackage{amsfonts}
\usepackage{amssymb}
\usepackage{graphicx}
\usepackage{hyperref}%

\newtheorem{theorem}{Theorem}
\newtheorem{lemma}[theorem]{Lemma}
\newtheorem{remark}[theorem]{Remark}
\newenvironment{proof}[1][Proof]{\noindent\textbf{#1.} }{\ \rule{0.5em}{0.5em}}
\renewcommand{\theequation}{\thesection.\arabic{equation}}
\numberwithin{equation}{section}
\pagestyle{headings}
\setcounter{secnumdepth}{3}
\setcounter{tocdepth}{3}
\begin{document}

\title{Structure Functions for small DIS $x$}
\author{Hrachya M. Babujian\thanks{Address: Yerevan Physics Institute,
Alikhanian Brothers 2, Yerevan, 375036 Armenia} \thanks{Address: Beijing
Institute of Mathematical Sciences and Applications, Beijing, China}
\thanks{E-mail: babujian@yerphi.am} ,
Angela Foerster\thanks{Address: Instituto de F\'{\i}sica
da UFRGS, Av. Bento Gon\c{c}alves 9500, Porto Alegre, RS - Brazil}
\thanks{E-mail: angela@if.ufrgs.br} ,\\ and Michael Karowski\thanks{E-mail:
karowski@physik.fu-berlin.de}\\Institut f\"{u}r Theoretische Physik,
Freie Universit\"{a}t Berlin,\\Arnimallee 14, 14195 Berlin, Germany }
\date{\today}
\maketitle

\begin{abstract}
Structure Functions for small DIS (deep inelastic scattering) $x$ for
integrable models are investigated, in particular, for the $O(N)$~$\sigma
$-model and $SU(N)$ chiral Gross-Neveu model, which are asymptotically free.
We get the universal behavior $x^{-1}\ln^{-2}x$ at small Bjorken variable $x$
and confirm a Balog Weisz conjecture. For a the second group of models, the
Sine-Gordon, sinh-Gordon and $Z(N)$, we find power behavior $x^{-\lambda}$.
The special behavior of the structure function for the Sine-Gordon model is
probably crucial for future investigation in 4D QCD.

\end{abstract}

\section{Introduction}

Quantum Chromodynamics (QCD) emerged in the early 1970s as the non-abelian
gauge theory describing the strong interaction, whose hallmark properties -
confinement and asymptotic freedom - were soon established both theoretically
and experimentally. Despite this conceptual success, the non-perturbative
structure of QCD in the infrared remains analytically inaccessible. On the
other hand, lower-dimensional quantum field theories have long served as
theoretical laboratories for developing and testing non-perturbative methods.
In particular, many asymptotically free theories in 1+1 dimensions are
integrable, and their exact S-matrices \cite{Za0,KTTW,ZZ} and form factors
\cite{KW,Sm,BFKZ}, obtained through the bootstrap program initiated in the
1970s, enable fully analytic control over scattering amplitudes and
correlation functions.

Previous analysis by Balog and Weisz \cite{Balog1,Balog2} suggested that
integrable 1+1-dimensional models may already capture certain universal
features of deep-inelastic scattering at small Bjorken variable $x$. Their
observations indicated that these theories can exhibit scaling patterns
reminiscent of high-energy behavior in four-dimensional gauge theories,
thereby motivating a closer examination of structure functions in integrable
settings. This provides a broader physical context for studying deep
inelastic-type quantities in exactly solvable models and for exploring to what
extent universal phenomena can arise independently of dimensionality.

In this article, we investigate structure functions in several exactly
integrable quantum field theories in 1+1 dimensions, taking advantage of the
availability of exact form factors. Form factors for a number of relevant
models are known in closed form \cite{KW,Sm,BFKZ}, allowing a detailed
analysis of their responses to external currents. We focus on the small-$x$
regime and find that, in this limit, the structure functions factories into an
$x$-dependent part and a $q^{2}$-dependent part,%
\[
W(x,q^{2})\overset{x\rightarrow0}{\rightarrow}f(x)g(q^{2})
\]
a behavior consistent with the qualitative expectations raised in earlier work
\cite{Balog1,Balog2}. Our goal is to compute these functions explicitly for
several representative integrable models and to characterize their universal
and model-dependent features.

Here we consider the $O(N)$-$\sigma$ model, the $SU(N)$ chiral Gross-Neveu
model, the sine--Gordon model, as well as the sinh-Gordon and the $Z(N)$ Ising
model in the scaling limit. These models have different symmetry groups: the
first two models have $O(N)$ and $SU(N)$ symmetry, while the last three have
$SL(2)_{q}$ quantum-group, $U(1)$ and $Z(N)$ symmetry. We should also mention
that the first two models are asymptotically free.

The exact S-matrices of these two groups of models are solutions of the
Yang-Baxter equation, which are rational and trigonometric, respectively. This
last point is essential for the high-energy asymptotic behavior of the
corresponding S-matrices, which is important in the Balog-Weisz approach. More
precisely, in this article we calculate the structure functions of these two
groups of models using exact form factors. The form factor equations imply
that the structure function factories for small value of $x$ as a product of
two functions, depending separately on $x$ and $q^{2}$.

For the first group of models, which are asymptotically free, we obtain a
universal function of $x$ (at small $x$)%
\[
f(x)\propto\frac{1}{x\ln^{2}x}%
\]
This function is the same for the $O(N)$-$\sigma$ model, as well as for the
$SU(N)$ Gross-Neveu model. Such universal behavior was conjectured by Balog
and Weisz \cite{Balog1,Balog2} as the small Bjorken $x$ behavior for 1+1
dimensional exactly integrable and asymptotically free QFTs. In this article,
we confirm that indeed, the 1+1 dimensional asymptotically free QFT models
exhibit this universal behavior, at least those models we consider. Balog and
Weisz further conjectured that "structurally similar universal formulae may
also hold for the small $x$ behavior of QCD in 4-dimensions". This should be
verified in future investigations in 4 dimensions and also checked
experimentally in the strong-coupling regime.

For the second group of models, we observe a power-law behavior at small $x$.
For models without backward scattering we find the simple behavior
proportional to $f(x)\propto x$. For the sine-Gordon model with $\beta
^{2}>4\pi$ (where there are only solitons and no bound states) we get
\[
f(x)\propto x^{-\lambda(\beta)}%
\]
where $\lambda(\beta)$ is a simple function of the coupling constant $\beta$.
This power-law behavior in $x$ is quite interesting because, as observed in
\cite{BKS}, in some energy interval $q^{2}>20~GeV^{2}$, the structure function
of experimental data for lepton (electron, positron)-proton scattering (HERA
data) also exhibits power-law behavior. This behavior is similar to the
sine-Gordon model when $\beta^{2}\approx6.\,23\pi$, which is a remarkable
observation. In \cite{BKS}, we considered old theoretical ideas and
experimental data related to Regge-calculus amplitudes and DIS amplitudes at
high energy. There is a factorization into two 2-dimensional parts: a
longitudinal and a transversal part. This simplifies their structure, is
essential for our work, and might be crucial for future investigations.
Collecting all these facts, we can conjecture that the longitudinal sector of
QCD at small $x$ might be related to the sine-Gordon model, which possesses
the same power-law behavior mentioned above.

\section{Structure functions}

Following J. Balog and P. Weisz \cite{Balog1,Balog2}, we investigate the
Fourier transform of the 1-particle expectation value for the commutator of
two local operators $\mathcal{O}(x)$ and $\mathcal{O}^{\prime}(x)$%
\begin{equation}
W(p,q)=\int d^{2}x\,e^{iqx}\,\left\langle \theta|\left[  \mathcal{O}%
(x),\mathcal{O}^{\prime}(0)\right]  |\theta\right\rangle \,. \label{W0}%
\end{equation}

\subsection{DIS kinematic variables}

We use the deep inelastic scattering (DIS) variables $\kappa$ and the
Bjorken variable $x$ given by%
\[
x=-\frac{q^{2}}{2pq},~q^{2}=-4\kappa^{2}m^{2}.
\]
If the DIS variables satisfy
\[
q^{2}<0,~\left(  p+q\right)  ^{2}\geq m^{2},~0\leq x\leq1
\]
then the second term in (\ref{W0}) vanishes $\left[  \mathcal{O}%
(x),\mathcal{O}^{\prime}(0)\right]  \rightarrow\mathcal{O}(x)\mathcal{O}%
^{\prime}(0)$ and%
\begin{align}
W(p,q) &  =\int d^{2}xe^{iqx}\,\left\langle \theta|\mathcal{O}(x)\mathcal{O}%
^{\prime}(0)|\theta\right\rangle =\sum_{r}W_{r}(p,q)\nonumber\\
W_{r}(p,q) &  =\int_{\underline{\theta}}J_{r}(\underline{\theta},\theta
)~(2\pi)^{2}\delta^{(2)}\left(  p+q-R\right)  \nonumber\\
J_{r}(\underline{\theta},\theta) &  =\left\langle \theta|\mathcal{O}%
(0)|\underline{\theta}\rangle_{\underline{\alpha}}\,^{\underline{\alpha}%
}\langle\underline{\theta}|\mathcal{O}^{\prime}(0)|\theta\right\rangle
\label{WW}%
\end{align}
with $\int_{\underline{\theta}}=\frac{1}{\left(  4\pi\right)  ^{r}}%
\int_{-\infty}^{\infty}d\theta_{1}\int_{-\infty}^{\theta_{1}}d\theta_{2}%
\dots\int_{-\infty}^{\theta_{r-1}}d\theta_{r}$ and an r-particle state
$|\underline{\theta}\rangle_{\underline{\alpha}}$ with momentum $R=\sum
_{i=1}^{r}p_{i},~p_{i}=m\binom{\cosh\theta_{i}}{\sinh\theta_{i}}~$and
$\underline{\alpha}=\alpha_{1}\dots\alpha_{r},$ where $\alpha_{i}$ denote the
type of the particle. The expression $|\underline{\theta}\rangle
_{\underline{\alpha}}\,^{\underline{\alpha}}\langle\underline{\theta}|$ is
always to be understood as $%
{\textstyle\sum\nolimits_{\underline{\alpha}}}
|\underline{\theta}\rangle_{\underline{\alpha}}\,^{\underline{\alpha}}%
\langle\underline{\theta}|$. The 1-particle state $|\theta\rangle$ has
momentum $p=m\binom{\cosh\theta}{\sinh\theta}$.

\paragraph{New variables:}

For details see \cite{Balog2} and appendix \ref{a0}. We apply the
transformation $\underline{\theta}\rightarrow\underline{u},\Lambda$, defined
by%
\begin{align}
u_{i}  &  =\theta_{i}-\theta_{i+1},~1\leq i\leq r-1\label{nv}\\
\mu(\underline{u})e^{\pm\Lambda}  &  =e^{\pm\theta_{1}}+\dots+e^{\pm\theta
_{r}}\nonumber
\end{align}
with
\begin{equation}
\mu^{2}(\underline{u})=R^{2}/m^{2}=r+2%
{\textstyle\sum\nolimits_{1\leq i<j\leq r}}
\cosh\left(  u_{i}+\dots+u_{j-1}\right)  . \label{mu}%
\end{equation}
We perform the $\Lambda$ and $u_{r-1}$ integrations and obtain
\begin{equation}
W_{r}(p,q)=\frac{1}{(4\pi)^{r-2}}\frac{1}{4m^{2}}\int
d\underline{u}_{r-1}~\frac{1}{\sum_{i=1}^{r-1}\sinh\left(  u_{i}%
+\dots+u_{r-1}^{(0)}\right)  }J_{r}\left(  \underline{\theta}(\underline
{u}^{(0)},\Lambda_{0})\right)  \label{Wr}%
\end{equation}
where $\int d\underline{u}_{r-1}=\int_{0}^{\infty}du_{1}\dots\int_{0}^{\infty
}du_{r-2}$. Here we have defined $\Lambda_{0}=\frac{1}{2}\ln\frac{\left(
p+q\right)  ^{+}}{\left(  p+q\right)  ^{-}}$ and $\underline{u}^{(0)}%
=u_{1},\dots,u_{r-2},u_{r-1}^{(0)}$ where $u_{r-1}^{(0)}$ is a solution of eq.
(\ref{u0}).

\paragraph{$\Lambda_{0}$ and $u_{r-1}^{(0)}$ in terms of $x$ and $\kappa$ for
small $x$:}

We take $p=\binom{m}{0}$, then we have using Lemma \ref{lu-1} of appendix
\ref{a0}%
\begin{align}
\theta_{i}(\underline{u}^{(0)},\Lambda_{0})  &  =u_{i}+\dots+u_{r-2}%
+W,~i=1,\dots,r-1\label{theta0}\\
\theta_{r}\left(  \underline{u}^{(0)},\Lambda_{0}\right)   &  =\epsilon
\nonumber
\end{align}
where for small $x$%
\begin{align}
W  &  =-\ln x+O(1)\label{W}\\
\epsilon &  =x\left(  1+\frac{1}{4\kappa^{2}}\mu_{r-1}^{2}\right)  +O(x^{2}).
\label{eps}%
\end{align}
Here $\mu_{r-1}^{2}$ is given by (\ref{mu}) for $r\rightarrow r-1$. Finally we
have for small $x$
\begin{equation}
W_{r}(p,q)=\frac{1}{\left(  4\pi\right)  ^{r-2}}\frac{x}{4\kappa^{2}}\frac
{1}{2m^{2}}\int d\underline{u}_{r-1}~J_{r}\left(  \underline{\theta
}(\underline{u}^{(0)},\Lambda_{0}),0\right)  +O(x^{2}). \label{Wrx}%
\end{equation}

\section{Factorization}

Form factors are vector functions given by matrix elements of local operators
$\mathcal{O}(x)$%
\[
F_{\underline{\alpha}}^{\mathcal{O}}(\underline{\theta})=\langle
0|\mathcal{O}(0)|\underline{\theta}\rangle_{\underline{\alpha}}%
\]
with $\underline{\theta}=\theta_{1},\dots,\theta_{n}$ and $\underline{\alpha
}=\alpha_{1},\dots,\alpha_{n}$. These functions satisfy the form factor
equations\footnote{For the chiral $SU(N)$ Gross-Neveu model these equations
are modified because of the anyonic statistics of the particles (see appendix
\ref{apSUN}).} (see eg. \cite{KW,Sm,BFKZ})
\begin{align}
F_{\dots\alpha_{i}\alpha_{j}\dots}^{\mathcal{O}}(\dots,\theta_{i},\theta
_{j},\dots)  &  =F_{\dots\alpha_{j}^{\prime}\alpha_{i}^{\prime}\dots
}^{\mathcal{O}}(\dots,\theta_{j},\theta_{i},\dots)\,S_{\alpha_{i}\alpha_{j}%
}^{\alpha_{j}^{\prime}\alpha_{i}^{\prime}}(\theta_{ij})\label{fi}\\
^{\alpha}\langle\theta|\mathcal{O}|\underline{\theta}\rangle_{\underline
{\alpha}}  &  =\langle0|\mathcal{O}|\underline{\theta},\theta-i\pi
\rangle_{\underline{\alpha}\alpha^{\prime}}\mathbf{C}^{\alpha^{\prime}\alpha
}=F_{\underline{\alpha}\alpha^{\prime}}^{\mathcal{O}}(\underline{\theta
},\theta-i\pi)\,\mathbf{C}^{\alpha^{\prime}\alpha}\label{fii}\\
\operatorname*{Res}_{\theta_{12}=i\pi}F_{\underline{\alpha}}^{\mathcal{O}%
}(\underline{\theta})  &  =2i\,\mathbf{C}_{\alpha_{1}^{\prime}\alpha
_{2}^{\prime}}\,F_{\underline{\hat{\alpha}}^{\prime}}^{\mathcal{O}}%
(\underline{\hat{\theta}})\left(  \mathbf{1}-S_{2n}\dots S_{23}\right)
_{\underline{\alpha}}^{\underline{\alpha^{\prime}}}\label{fiii}\\
F_{\underline{\alpha}}^{\mathcal{O}}(\underline{\theta})  &  =e^{s\mu
}\,F_{\underline{\alpha}}^{\mathcal{O}}(\underline{\theta}+\mu) \label{fv}%
\end{align}
where $S(\theta)$ is the S-matrix, $\mathbf{C}$ the charge conjugation matrix,
$\underline{\hat{\theta}}=\theta_{3},\dots,\theta_{n}$, $\underline
{\hat{\alpha}}=\alpha_{3},\dots,\alpha_{n}$ and $s$ the spin of $\mathcal{O}$.

These relations imply%
\begin{align*}
^{\alpha}\langle\theta|\mathcal{O}(0)|\underline{\theta}\rangle_{\underline
{\alpha}}  & =F_{\underline{\hat{\alpha}}\alpha_{r}\alpha^{\prime}%
}^{\mathcal{O}}(\underline{\theta}_{r-1},\theta_{r},-i\pi)\mathbf{C}%
^{\alpha^{\prime}s}\\
& \overset{\theta_{r}\rightarrow0}{\rightarrow}\frac{2i}{\theta_{r}%
}F_{\underline{\hat{\alpha}}^{\prime}}^{\mathcal{O}}(\underline{\theta}%
_{r-1})\left(  \mathbf{1}-S(\theta_{1r})\dots S(\theta_{r-1r})\right)
_{\underline{\hat{\alpha}}\alpha_{r}}^{\alpha\underline{\hat{\alpha}}^{\prime
}}%
\end{align*}
Further we need the asymptotic behavior of the monodromy matrices%
\begin{gather}
T_{\underline{\delta}\gamma}^{\alpha\underline{\beta}}(\underline{\theta
}+W,\theta)=\left(  S(\theta_{1}+W-\theta)\dots S(\theta_{n}+W-\theta)\right)
_{\underline{\delta}\gamma}^{\alpha\underline{\beta}}\nonumber\\
~~~~~~~~~~~~\overset{W\rightarrow\infty}{\rightarrow}\left(
\mathbf{1\mathbf{+}}X\mathbf{(}W-\theta\mathbf{)M}(\underline{\theta})\right)
_{\underline{\delta}\gamma}^{\alpha\underline{\beta}}\label{ta}\\
T_{\alpha\underline{\beta}}^{\underline{\delta}\gamma}(\theta,\underline
{\theta}+W)=\left(  S(\theta-\theta_{1}-W)\dots S(\theta-\theta_{n}-W)\right)
_{\alpha\underline{\beta}}^{\underline{\delta}\gamma}\nonumber\\
~~~~~~~~~~~~\overset{W\rightarrow\infty}{\rightarrow}\left(  \mathbf{1+}%
X^{\ast}(W-\theta)\mathbf{M}(\underline{\theta})\right)  _{\alpha
\underline{\beta}}^{\underline{\delta}\gamma} \label{tab}%
\end{gather}
with the scalar function $X\mathbf{(}W\mathbf{)}$ and the Matrix
\begin{equation}
\text{ }\mathbf{M}(\underline{\theta})=\sum_{i=1}^{m}\left(  1\dots
M_{i}(\theta_{i})\dots1\right)  \label{M}%
\end{equation}
see Appendix.

\begin{lemma}
\label{las}We set $p=\binom{m}{0}$ or $\theta=0$, write $\underline{\theta
}=\underline{\hat{\theta}}+W,\theta_{r}$ and $\underline{\alpha}=\alpha
_{1},\dots,\alpha_{r}=\underline{\hat{\alpha}},\alpha_{r}$. For Lorentz scalar
operators $\mathcal{O}$ and $\mathcal{O}^{\prime}$ the form factor equations
(\ref{fi})-(\ref{fv}) and (\ref{ta}) imply for $x\rightarrow0$ (which means
$\theta_{r}\rightarrow0$ and $W\rightarrow\infty$ see (\ref{theta0}),
(\ref{W}) and (\ref{eps}))%
\begin{align}
&  ^{\alpha}\langle\theta|\mathcal{O}(0)|\underline{\theta}\rangle
_{\underline{\alpha}}\overset{x\rightarrow0}{\rightarrow}-\frac{2i}{\theta
_{r}}X(W)\langle0|\mathcal{O}(0)|\underline{\hat{\theta}}\rangle
_{\underline{\hat{\alpha}}^{\prime}}\mathbf{M}(\underline{\hat{\theta}%
})_{\underline{\hat{\alpha}}\alpha_{r}}^{\alpha\underline{\hat{\alpha}%
}^{\prime}}\label{tW1}\\
&  ^{\underline{\alpha}}\langle\underline{\theta}|\mathcal{O}^{\prime
}(0)|\theta\rangle_{\beta}\overset{x\rightarrow0}{\rightarrow}\frac{2i}%
{\theta_{r}}X^{\ast}(W)\mathbf{M}(\underline{\hat{\theta}})_{\beta
\underline{\hat{\alpha}}^{\prime}}^{\underline{\hat{\alpha}}\alpha_{r}%
}~^{\underline{\hat{\alpha}}^{\prime}}\langle\underline{\hat{\theta}%
}|\mathcal{O}^{\prime}(0)|0\rangle\label{tW2}%
\end{align}
and finally%
\begin{multline}
^{\alpha}\langle\theta|\mathcal{O}(0)|\underline{\theta}\rangle_{\underline
{\alpha}}~^{\underline{\alpha}}\langle\underline{\theta}|\mathcal{O}^{\prime
}(0)|\theta\rangle_{\beta}~\overset{x\rightarrow0}{\rightarrow}\left(
\frac{2}{\theta_{r}}\right)  ^{2}X(W)X^{\ast}(W)\label{XX}\\
\times\langle0|\mathcal{O}(0)|\underline{\hat{\theta}}\rangle_{\underline
{\hat{\alpha}}^{\prime\prime}}\mathbf{M}(\underline{\hat{\theta}}%
)_{\underline{\hat{\alpha}}\alpha_{r}}^{\alpha\underline{\hat{\alpha}}%
^{\prime\prime}}~\mathbf{M}(\underline{\hat{\theta}})_{\beta\underline
{\hat{\alpha}}^{\prime}}^{\underline{\hat{\alpha}}\alpha_{r}}~^{\underline
{\hat{\alpha}}^{\prime}}\langle\underline{\hat{\theta}}|\mathcal{O}^{\prime
}(0)|0\rangle.
\end{multline}

\end{lemma}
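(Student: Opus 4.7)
The plan is to chain the form-factor axioms (\ref{fi})--(\ref{fv}) with the monodromy-matrix asymptotics (\ref{ta})--(\ref{tab}), in the order dictated by the small-$x$ kinematics (\ref{theta0})--(\ref{eps}). Start from the identity displayed just above the lemma, itself a consequence of crossing (\ref{fii}) and the annihilation-pole axiom (\ref{fiii}) applied in the variable $\theta_r$:
\[
{}^{\alpha}\langle 0|\mathcal{O}(0)|\underline{\theta}\rangle_{\underline{\alpha}} \;\xrightarrow{\theta_r\to 0}\; \frac{2i}{\theta_r}\, F^{\mathcal{O}}_{\underline{\hat{\alpha}}'}(\underline{\theta}_{r-1}) \bigl(\mathbf{1}-S(\theta_{1r})\cdots S(\theta_{r-1,r})\bigr)_{\underline{\hat{\alpha}}\alpha_r}^{\alpha\underline{\hat{\alpha}}'}.
\]
Under the small-$x$ kinematics (\ref{theta0}) one has $\theta_i = \hat{\theta}_i + W$ for $i \le r-1$ with $\hat{\theta}_i = u_i+\cdots+u_{r-2}$, and $\theta_r = \epsilon \to 0$, hence $\theta_{ir} = \hat{\theta}_i + W - \epsilon$. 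As $W \to \infty$ the asymptotic (\ref{ta}) with reference rapidity $0$ collapses the product of $S$-matrices to $\mathbf{1} + X(W)\mathbf{M}(\underline{\hat{\theta}}) + \cdots$; the two $\mathbf{1}$'s cancel, leaving $-X(W)\mathbf{M}(\underline{\hat{\theta}})$ at leading order. Finally, Lorentz covariance (\ref{fv}) with $s=0$---which is exactly where the scalar hypothesis on $\mathcal{O}$ enters---strips the common shift $W$ out of the residual form factor, giving $F^{\mathcal{O}}(\underline{\hat{\theta}}+W) = F^{\mathcal{O}}(\underline{\hat{\theta}}) = \langle 0|\mathcal{O}(0)|\underline{\hat{\theta}}\rangle$. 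Assembling the three factors yields (\ref{tW1}).

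The companion relation (\ref{tW2}) follows by the same three steps applied to the conjugate matrix element: crossing is applied on the opposite side, the annihilation pole produces $+\frac{2i}{\theta_r}$ accompanied by the reversed ordering of $S$-matrices, and the mirror-image asymptotic (\ref{tab}) delivers the conjugate kernel $X^{\ast}(W)\mathbf{M}(\underline{\hat{\theta}})$. The product formula (\ref{XX}) is then obtained simply by multiplying (\ref{tW1}) and (\ref{tW2}) and relabelling the internal sum-index in the $\mathcal{O}$ factor from $\underline{\hat{\alpha}}'$ to $\underline{\hat{\alpha}}''$, so that the two independent sums over intermediate indices stand side by side.

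The only nontrivial point to control is the double limit $\theta_r\to 0$, $W\to\infty$: one has to check that the $1/\theta_r$ pole furnished by (\ref{fiii}) and the leading $X(W)$ piece of (\ref{ta}) really factorise, with all subleading pieces in both expansions producing only $O(x^2)$ corrections consistent with the remainder in (\ref{Wrx}). The secondary caveat is the scalar hypothesis on $\mathcal{O},\mathcal{O}'$: for $s\neq 0$ the shift (\ref{fv}) would reintroduce an $e^{sW}$ factor and destroy the clean separation between $X(W)$ and the reduced form factors that is the whole point of the lemma.
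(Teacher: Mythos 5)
Your argument is correct and is essentially the paper's own proof (carried out model by model in appendices \ref{apON}, \ref{apSUN}, \ref{apSG}): crossing (\ref{fii}) plus the annihilation pole (\ref{fiii}) give the factor $\frac{2i}{\theta_r}\bigl(\mathbf{1}-S(\theta_{1r})\cdots S(\theta_{r-1,r})\bigr)$, the asymptotic (\ref{ta}) cancels the identity and leaves $-X(W)\mathbf{M}(\underline{\hat{\theta}})$, and (\ref{fv}) with $s=0$ strips the common shift $W$ from the residual form factor. Two details worth stating more carefully: the pole prefactor in the crossed matrix element is $\frac{2i}{-\theta_r}$ rather than $+\frac{2i}{\theta_r}$, which is exactly what produces the relative sign between (\ref{tW1}) and (\ref{tW2}) and hence the positive $(2/\theta_r)^2$ in (\ref{XX}); and the paper must rerun the argument with modified axioms for $SU(N)$ (anyonic statistics factors $\dot{\sigma}$ and the sign $(-1)^{(N-1)(n-2)}$) and for sine-Gordon (restriction to chargeless states), cases your generic derivation assumes away.
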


\begin{proof}
See appendices \ref{apON}, \ref{apSUN} and \ref{apSG}. Note that eq.
(\ref{XX}) also holds, if $\mathcal{O}$ and $\mathcal{O}^{\prime}$ are not
Lorentz scalar operators, but have the same spin, because the contributions of
form factor equation (\ref{fv}) would cancel.
\end{proof}

\begin{theorem}
\label{th}Let $\mathcal{O=O}^{A},~\mathcal{O}^{\prime}=\mathcal{O}_{B}$ local
operators, where $A,B$ denote an iso-vector, an antisymmetric tensor or
adjoint representation symmetry. The structure function (\ref{W0}) factorizes
for small $x$%
\begin{equation}
\fbox{$\rule[-0.2in]{0in}{0.5in}\displaystyle~W_{\beta,B}^{\alpha
,A}(p,q)\overset{x\rightarrow0}{\rightarrow}f(x)g_{\beta,B}^{\alpha,A}(q^{2})$
~} \label{W1}%
\end{equation}
where%
\begin{align}
f(x) &  =\frac{1}{x}\left\vert X(-\ln x)\right\vert ^{2}\label{f}\\
g_{\beta,B}^{\alpha,A}(q^{2}) &  =\frac{2}{-q^{2}}\sum_{r}\frac{1}{\left(
4\pi\right)  ^{r-2}}\int d\underline{u}_{r-1}w_{\beta,B}^{\alpha,A}%
(\underline{u}_{r-1})\label{g}\\
w_{\beta,B}^{\alpha,A}(\underline{u}_{r-1}) &  =\frac{\langle0|\mathcal{O}%
^{A}(0)|\underline{\hat{\theta}}\rangle_{\underline{\hat{\alpha}}%
^{\prime\prime}}\mathbf{M}(\underline{\hat{\theta}})_{\underline{\hat{\alpha}%
}\alpha_{r}}^{\alpha\underline{\hat{\alpha}}^{\prime\prime}}~\mathbf{M}%
(\underline{\hat{\theta}})_{\beta\underline{\hat{\alpha}}^{\prime}%
}^{\underline{\hat{\alpha}}\alpha_{r}}~^{\underline{\hat{\alpha}}^{\prime}%
}\langle\underline{\hat{\theta}}|\mathcal{O}_{B}(0)|0\rangle}{\left(
1+\mu_{r-1}^{2}/(4\kappa^{2})\right)  ^{2}}\label{w}%
\end{align}
with $\underline{u}_{r-1}=u_{1},\dots,u_{r-2}$.
\end{theorem}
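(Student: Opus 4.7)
The plan is to feed the factorized small-$x$ asymptotic (\ref{XX}) from Lemma~\ref{las} into the integral representation (\ref{Wrx}) and simply collect prefactors. Since the operator/form-factor work has already been absorbed into Lemma~\ref{las}, the remaining task is essentially bookkeeping, identifying which scalar factors produce the $x$-dependence and which produce the $q^{2}$-dependence.

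Concretely, I would first identify the integrand $J_{r}(\underline{\theta}(\underline{u}^{(0)},\Lambda_{0}),0)$ appearing in (\ref{Wrx}) with the product $^{\alpha}\!\langle\theta|\mathcal{O}^{A}(0)|\underline{\theta}\rangle_{\underline{\alpha}}\,^{\underline{\alpha}}\!\langle\underline{\theta}|\mathcal{O}_{B}(0)|\theta\rangle_{\beta}$ appearing on the left of (\ref{XX}), with the rapidities labelled by $(\underline{\hat{\theta}},\theta_{r})$ defined through the change of variables (\ref{nv}) and evaluated at $\underline{u}^{(0)},\Lambda_{0}$ as in (\ref{theta0}). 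Lemma~\ref{las} then replaces this product, as $x\to 0$, by $(2/\theta_{r})^{2}\,X(W)X^{\ast}(W)$ times exactly the numerator of $w_{\beta,B}^{\alpha,A}$ in (\ref{w}).

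Next I insert the explicit $x$-expansions (\ref{W}) and (\ref{eps}). These give $W=-\ln x+O(1)$ and $\theta_{r}=x\,(1+\mu_{r-1}^{2}/(4\kappa^{2}))+O(x^{2})$, so $(2/\theta_{r})^{2}=4 x^{-2}\,(1+\mu_{r-1}^{2}/(4\kappa^{2}))^{-2}$ to leading order, and $|X(W)|^{2}\sim|X(-\ln x)|^{2}$. Multiplying with the overall prefactor $x/(4\kappa^{2}\cdot 2m^{2})$ from (\ref{Wrx}) and using $-q^{2}=4\kappa^{2}m^{2}$, the $x$-factors collapse to $x^{-1}|X(-\ln x)|^{2}=f(x)$, the denominator $(1+\mu_{r-1}^{2}/(4\kappa^{2}))^{2}$ is precisely the one in (\ref{w}), and $1/(2\kappa^{2}m^{2})=2/(-q^{2})$ is pulled out. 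The integrand that remains inside $\int d\underline{u}_{r-1}$ depends only on rapidity differences and on $\kappa$ (i.e.\ on $q^{2}$), so summing over $r$ separates the $x$-factor from the $q^{2}$-factor and yields (\ref{W1}) with $f$ and $g$ as in (\ref{f})--(\ref{w}).

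The main obstacle is not algebraic but analytic: one must justify that the pointwise small-$x$ limit of Lemma~\ref{las} can be exchanged with the $\underline{u}_{r-1}$-integral and the $r$-sum, and one must check that the $O(1)$ shift inside $W=-\ln x+O(1)$ is harmless, i.e.\ that $X$ varies slowly enough that $|X(-\ln x+O(1))|^{2}\sim|X(-\ln x)|^{2}$ up to a factor absorbed into the overall proportionality. Both are model-dependent and are the content of the appendices cited in the proof of Lemma~\ref{las}; granted these, the theorem follows by direct substitution.
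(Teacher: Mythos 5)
Your proposal is correct and follows exactly the route of the paper's own (one-line) proof: substituting the asymptotic (\ref{XX}) of Lemma \ref{las} into (\ref{Wrx}) and using (\ref{theta0}), (\ref{W}), (\ref{eps}) to collect the prefactors into $f(x)$ and $g(q^{2})$. Your bookkeeping ($\tfrac{x}{4\kappa^{2}}\tfrac{1}{2m^{2}}\cdot\tfrac{4}{x^{2}}=\tfrac{1}{x}\cdot\tfrac{2}{-q^{2}}$, with the denominator $(1+\mu_{r-1}^{2}/(4\kappa^{2}))^{2}$ coming from $\theta_{r}=\epsilon$) is exactly what the paper intends, and your remarks on interchanging the limit with the integral and sum are a reasonable caveat the paper leaves implicit.
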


\begin{proof}
The claim follows from (\ref{Wrx}) and (\ref{XX}) using (\ref{theta0}),
(\ref{W}) and (\ref{eps}).
\end{proof}

\begin{remark}
For the $SU(N)$ chiral Gross-Neveu model the form factor equations and
(\ref{ta}) are modified because of the anyonic statistics of the particles.
However, for charge zero operators the results of Theorem (\ref{th}) and eqs.
(\ref{f}), (\ref{g}) and (\ref{w}) hold unchanged, see subsection \ref{sSUN}
and appendix \ref{apSUN}. This is also true for the sine-Gordon model for
$\nu>1$, the sinh-Gordon and the $Z(N)$-Ising model in the scaling limit, see
subsections \ref{sSG}, \ref{ssinhG}, \ref{sZN} and appendices \ref{apSG},
\ref{apsinhG}, \ref{apZN}.
\end{remark}

\section{Results for various models}

The functions $X(W)$ and $f(x)$ depend on the 2-particle S-matrix as solution
of Yang-Baxter equations. We distinguish the following cases:

\noindent{\bf Rational solutions:}
The $O(N)$-$\sigma$ and the $SU(N)$ chiral Gross-Neveu model show the typical
asymptotic behavior (\ref{ta}) and (\ref{tab}) with (see appendices
\ref{apON},\ref{apSUN})
\begin{equation}
X(W)=\mathbf{-}i\pi\nu\frac{1}{W} \label{XONSUN}%
\end{equation}
and the function (\ref{f}) is%
\begin{equation}
f(x)=\pi^{2}\nu^{2}\frac{1}{x\ln^{2}x} \label{fONSUN}%
\end{equation}
where $\nu=2/(N-2)$ for $O(N)$ and $\nu=2/N$ for $SU(N)$. The matrix $M$ in
(\ref{M}) is independent of $\theta$ and is the Lie-algebra generator in the
vector representation and $\mathbf{M}$ in higher representations.

For $O(N)$ it is%
\begin{equation}
M_{\alpha\beta}^{\delta\gamma}=\left(  P-K\right)  _{\alpha\beta}%
^{\delta\gamma}=\delta_{\alpha}^{\gamma}\delta_{\beta}^{\delta}-\mathbf{C}%
^{\delta\gamma}\mathbf{C}_{\alpha\beta}\,,~~\nu=2/(N-2) \label{MON}%
\end{equation}
where the charge conjugation matrix $\mathbf{C}$ for the real basis is
$\mathbf{C}_{\alpha\beta}=\delta_{\alpha\beta}$ (see appendix \ref{apON}).

For the $SU(N)$ case the form factor equations are more complicated, because
the statistics of the particles is anyonic. However, if $\underline{\beta}$ is
a chargeless state\footnote{This means there are the same number of particles
and anti-particles.}, the relations (\ref{ta}) and (\ref{tab}) hold with
(\ref{XONSUN}) and
\begin{equation}
M_{\alpha\beta}^{\delta\gamma}=P_{\alpha\beta}^{\delta\gamma}=\delta_{\alpha
}^{\gamma}\delta_{\beta}^{\delta}~,~~M_{\bar{\alpha}\beta}^{\delta\bar{\gamma
}}=-\mathbf{C}^{\delta\bar{\gamma}}\mathbf{C}_{\bar{\alpha}\beta}\,,~\nu=2/N
\label{MSUN}%
\end{equation}
(see appendix \ref{apSUN}).

\noindent{\bf Trigonometric solution with backward scattering:}
The Sine-Gordon model is equivalent to the massive Thirring model \cite{Co},
which is symmetric with respect to the "quantum group $SL_{q}(2)$" with
\begin{equation}
q=e^{-i\pi\left(  1-1/\nu\right)  }~,~~\nu=\frac{\beta^{2}}{8\pi-\beta^{2}}.
\label{nuSG}%
\end{equation}
We consider in the following the case $\nu>1$, where there are only solitons
and anti-solitons and no bound states ($\alpha,\beta,\gamma,\delta\in
\{s,\bar{s}\}$). The typical asymptotic behavior is given by the formulae
(\ref{ta}) and (\ref{tab}), if $\underline{\beta}$ is a chargeless state (see
appendix \ref{apSG}), with%
\begin{align}
X(W)  &  =2i\left(  \sin\pi/\nu\right)  e^{-W/\nu}\label{XSG}\\
M_{s\bar{s}}^{s\bar{s}}(\theta)  &  =M_{\bar{s}s}^{\bar{s}s}(\theta
)=e^{-\theta/\nu}\,,~M_{\alpha\beta}^{\delta\gamma}=0~\text{else.} \label{MSG}%
\end{align}
The function (\ref{f}) shows power behavior%
\begin{equation}
f(x)=\left(  4\sin^{2}\pi/\nu\right)  \,x^{\frac{2}{\nu}-1},~\nu>1.
\label{fSG}%
\end{equation}

\begin{remark}
For $\beta^{2}\rightarrow8\pi$ or $\nu\rightarrow\infty$ the symmetry is given
by $SL_{-1}(2)$. The model will be similar to the $SU(2)$ chiral Gross-Neveu
model, asymptotically free and the small $x$ behavior is that of
(\ref{fONSUN}) $f(x)\propto x^{-1}\ln^{-2}x$.
\end{remark}

\begin{remark}
The result (\ref{fSG}) may be generalized: For the $SL_{q}(N)$ symmetric
S-matrix, as $q$-deformation of the $SU(N)$ S-matrix given by eqs. (\ref{sun})
and (\ref{aSUN}), the function $f(x)$ is%
\begin{equation}
f(x)=4\left(  \sin^{2}\pi\mu\right)  x^{N\mu-1},~q=e^{-i\pi\mu}\,.
\label{fSUNq}%
\end{equation}
\end{remark}

\noindent{\bf Trigonometric solution without backward scattering:}
For these models the S-matrix is a number and the typical behavior is given
by
\[
X(W)\propto e^{-W},~~f(x)\propto x,~~M(\underline{\theta})=\sum_{i=1}%
^{n}e^{-\theta_{i}}%
\]
where $M$ is a number here. In particular:

\subparagraph{The sinh-Gordon model S-matrix}

is obtained from the sine-Gordon breather S-matrix for imaginary couplings
$\beta$. There is only one type of bosonic particles and the model is $U(1)$ symmetric.

The functions $X$ and $f$ are (see appendix \ref{apsinhG})%

\[
X(W)=\left(  4i\sin\pi\nu\right)  e^{-W},~~f(x)=\left(  16\sin^{2}\pi
\nu\right)  \,x
\]

\subparagraph{The $Z(N)$ model S-matrix}

is \cite{KS,K3,BFK,BK04}%
\[
S(\theta)=\frac{\sinh\frac{1}{2}(\theta+\frac{2\pi i}{N})}{\sinh\frac{1}%
{2}(\theta-\frac{2\pi i}{N})}%
\]

and the functions $X$ and $f$ are (see appendix \ref{apZN})%
\[
X(W)=\left(  2i\sin2\pi/N\right)  e^{-W},~~f(x)=\left(  4\sin^{2}%
2\pi/N\right)  \,x
\]

\section{The function $g(q^{2})$}

In the following, we compute the function $g_{\beta,B}^{\alpha,A}(q^{2})$ of
(\ref{g}) for the currents of the models: $O(N)~\sigma$-, $SU(N)$ chiral
Gross-Neveu, Sine-Gordon and $Z(N)$-Ising in the scaling limit, in
three-intermediate-particle approximation.

We investigate the structure function of the current $J_{\mu}(x)$%
\[
W_{\mu\nu}(p,q)=\int d^{2}xe^{iqx}\int_{\underline{\theta}}\left\langle
\theta|\left[  J_{\mu}(x),J_{\nu}(0)\right]  |\theta\right\rangle
\]
(for simplicity we suppress here the isospin indices). The conservation law
$\partial^{\mu}J_{\mu}(x)=0$ implies the existence of a pseudo-potential
$\varphi(x)$ with%
\[
J_{\mu}(x)=\epsilon_{\mu\nu}\partial^{\nu}\varphi(x).
\]
Therefore we consider%
\begin{align}
W_{\mu}^{\mu}(p,q)  &  =\int d^{2}xe^{iqx}\int_{\underline{\theta}%
}~\left\langle \theta|J_{\mu}(x)J^{\mu}(0)|\theta\right\rangle =\sum_{r}%
W_{r}(p,q)\nonumber\\
W_{r}(p,q)  &  =-q^{2}\int_{\underline{\theta}}~\langle\theta|\varphi
(0)|\underline{\theta}\rangle_{\underline{\alpha}}~^{\underline{\alpha}%
}\langle\underline{\theta}|\varphi(0)|\theta\rangle~(2\pi)^{2}\delta
^{(2)}\left(  p+q-R\right)  \label{Wgen}%
\end{align}
where the current is written in terms of the pseudo-potential and where
$g^{\mu\nu}\epsilon_{\mu\mu^{\prime}}\epsilon_{\nu\nu^{\prime}}q^{\mu^{\prime}}%
q^{\nu^{\prime}}=-q^{2}$ was used. For $r=3$ the function of (\ref{g}) reads
as
\begin{equation}
g(q^{2})=\frac{1}{2\pi}\int_{0}^{\infty}du\,w(u)\,. \label{g3}%
\end{equation}
with
\begin{equation}
w_{\beta}^{\alpha}(u)=w(u)\delta_{\beta}^{\alpha}=\frac{\langle0|\varphi
(0)|\underline{\hat{\theta}}\rangle_{\underline{\hat{\alpha}}^{\prime\prime}%
}\mathbf{M}(\underline{\hat{\theta}})_{\underline{\hat{\alpha}}\alpha_{r}%
}^{\alpha\underline{\hat{\alpha}}^{\prime\prime}}~\mathbf{M}(\underline
{\hat{\theta}})_{\beta\underline{\hat{\alpha}}^{\prime}}^{\underline
{\hat{\alpha}}\alpha_{r}}~^{\underline{\hat{\alpha}}^{\prime}}\langle
\underline{\hat{\theta}}|\varphi(0)|0\rangle}{\left(  1+\left(  2+2\cosh
u\right)  /(-q^{2}/m^{2})\right)  ^{2}} \label{w3}%
\end{equation}
with $\underline{\hat{\theta}}=u,0$ (see (\ref{theta0})), the matrix
$\mathbf{M}(\underline{\hat{\theta}})$ depends on the model.

\subsection{The $O(N)$ $\sigma$-model}

\label{sON}

\paragraph{$O(N)$ current and pseudo-potential:}

The pseudo-potential $J^{\gamma\delta}(x)$ is an antisymmetric tensor and is
related to the current by%
\[
J_{\mu}^{\gamma\delta}(x)=\varphi^{\gamma}\partial_{\mu}\varphi^{\delta
}-\varphi^{\delta}\partial_{\mu}\varphi^{\gamma}=\epsilon_{\mu\nu}%
\partial^{\nu}J^{\gamma\delta}(x).
\]
In particular, we have for two particles \cite{KW,BFK7}%
\begin{equation}
\langle0|J^{\gamma\delta}\left(  0\right)  |\underline{\theta}\rangle
_{\alpha_{1}\alpha_{2}}=\left(  \delta_{\alpha_{1}}^{\gamma}\delta_{\alpha
_{2}}^{\delta}-\delta_{\alpha_{1}}^{\delta}\delta_{\alpha_{2}}^{\gamma
}\right)  \tanh\tfrac{1}{2}\theta_{12}~F_{-}(\theta_{12}) \label{J2}%
\end{equation}
where $F_{-}(\theta)$ is the minimal form factor function (\ref{FON}) . For
convenience we introduce%
\[
J_{\rho\sigma}=\mathbf{C}_{\rho\delta}\mathbf{C}_{\sigma\gamma}J^{\gamma
\delta}%
\]
where the charge conjugation matrix is $\mathbf{C}_{\alpha\beta}%
=\delta_{\alpha\beta}$ \cite{BFK7}. For details see appendix \ref{apON}.

\subparagraph{Three-intermediate-particle $(r=3)$ approximation:}

We consider the structure function of theorem \ref{th} $W_{\beta,B}^{\alpha
,A}(p,q)=W_{\beta,\rho\sigma}^{\alpha,\gamma\delta}(p,q)$ and the special
component where $~\alpha=\beta=1,~\gamma=\rho=2,~\delta=\sigma=3$. Then using
(\ref{J2}) the function $w(u)$ in (\ref{w3}) is%
\begin{equation}
w(u)=\frac{4\left(  \tanh^{2}\tfrac{1}{2}u\right)  }{\left(  1+\left(
2+2\cosh u\right)  /(-q^{2}/m^{2})\right)  ^{2}}F_{-}(u)F_{-}(-u) \label{wON}%
\end{equation}
because only $\mathbf{M}_{13,2}^{1,23}~\mathbf{M}_{1,23}^{13,2}+\mathbf{M}%
_{21,3}^{1,32}~\mathbf{M}_{1,23}^{21,3}+\mathbf{M}_{12,3}^{1,32}%
~\mathbf{M}_{1,32}^{12,3}+\mathbf{M}_{31,2}^{1,32}~\mathbf{M}_{1,32}^{31,2}=4$
contributes to $\mathbf{M}_{\underline{\hat{\alpha}}\alpha_{r}}^{1\underline
{\hat{\alpha}}^{\prime\prime}}~\mathbf{M}_{1\underline{\hat{\alpha}}^{\prime}%
}^{\underline{\hat{\alpha}}\alpha_{r}}$ (see (\ref{M}) and (\ref{MON})). The
function $g(q^{2})$ for the $O(N)$ $\sigma$-model is given by (\ref{g3}),
(\ref{wON}) and is plotted in Figure \ref{fon}.
\begin{figure}
[h]
\begin{center}
\includegraphics[
height=2.2733in,
width=3.6624in
]%
{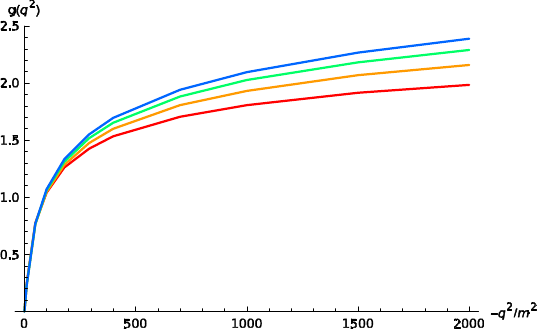}%
\\
\caption{Plots of $g(q^{2})$ versus $-q^{2}/m^{2}$ for $O(N)$, where $N=$ 3
(red), 4 (yellow), 5 (green) and 6 (blue)}%
\label{fon}%
\end{center}
\end{figure}

\subsection{The $SU(N)$ chiral Gross-Neveu model}

\label{sSUN}

\paragraph{The $SU(N)$ Noether current:}

The pseudo-potential $J^{\alpha\bar{\beta}}(x)$ is related to the $SU(N)$
Noether current by%
\begin{equation}
J_{\mu}^{\alpha\bar{\beta}}=\epsilon_{\mu\nu}\partial^{\nu}J^{\alpha\bar
{\beta}}(x) \label{J}%
\end{equation}
The particle anti-particle state the form factor is \cite{BFK1,BFK3}%
\begin{equation}
\langle\,0\,|\,J^{\gamma\bar{\delta}}|\,\underline{\theta}\rangle_{\alpha
\bar{\beta}}=F_{\alpha\bar{\beta}}^{J^{\gamma\bar{\delta}}}(\underline{\theta
})=\delta_{\alpha}^{\gamma}\delta_{\bar{\beta}}^{\bar{\delta}}\left(
\cosh\tfrac{1}{2}\theta_{12}\right)  ^{-1}\bar{F}\left(  \theta_{12}\right)
\label{J2SUN}%
\end{equation}
where $\bar{F}\left(  \theta\right)  $ is the minimal form factor function
(\ref{FSUN}).For convenience we define%
\[
J_{\rho\bar{\sigma}}=\mathbf{C}_{\rho\bar{\delta}}\mathbf{C}_{\bar{\sigma
}\gamma}J^{\gamma\bar{\delta}}.
\]
For details see appendix \ref{apSUN}.

\subparagraph{Three-intermediate-particle $(r=3)$ approximation:}

We consider the structure function of theorem \ref{th} $W_{\beta,B}^{\alpha
,A}(p,q)=W_{\beta,\rho\bar{\sigma}}^{\alpha,\gamma\bar{\delta}}(p,q)$ and the
special component where $~\alpha=\beta=1,~\gamma=\rho=2,~\bar{\delta}%
=\bar{\sigma}=\bar{3}$ . Then using (\ref{J2SUN}) the function $w(u)$ in
(\ref{w3}) is%
\begin{equation}
w(u)=\frac{2\left(  \cosh\tfrac{1}{2}u\right)  ^{-2}}{\left(  1+\left(
2+2\cosh u\right)  /(-q^{2}/m^{2})\right)  ^{2}}\bar{F}(u)\bar{F}(-u)
\label{wSUN}%
\end{equation}
because only $\mathbf{M}_{1\bar{3},2}^{1,2\bar{3}}~\mathbf{M}_{1,2\bar{3}%
}^{1\bar{3},2}=1$ and $\mathbf{M}_{\bar{3}1,2}^{1,\bar{3}2}~\mathbf{M}%
_{1,\bar{3}2}^{\bar{3}1,2}=1$ contribute to $\mathbf{M}_{\underline{\hat
{\alpha}}\alpha_{r}}^{1\underline{\hat{\alpha}}^{\prime\prime}}~\mathbf{M}%
_{1\underline{\hat{\alpha}}^{\prime}}^{\underline{\hat{\alpha}}\alpha_{r}}$
(see (\ref{M}) and (\ref{MSUN})). The function $g(q^{2})$ for the
$SU(N)$-model is given by (\ref{g3}), (\ref{wSUN}) and is plotted in Figure
\ref{fsu}.
\begin{figure}
[h]
\begin{center}
\includegraphics[
height=2.3593in,
width=3.8296in
]%
{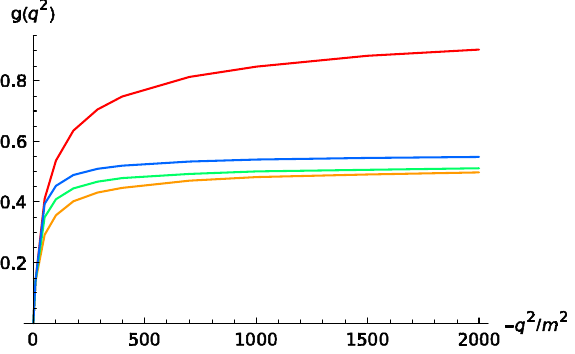}%
\caption{Plots of $g(q^{2})$ versus $-q^{2}/m^{2}$ for $SU(N)$, where $N=$ 2
(red), 3 (yellow), 5 (green) and 8 (blue)}%
\label{fsu}%
\end{center}
\end{figure}

\subsection{Sine-Gordon Model}

\label{sSG}

The classical sine-Gordon model is given by the wave equation
\[
\Box\varphi(t,x)+\frac{\alpha}{\beta}\sin\beta\varphi(t,x)=0.
\]
We consider the case
\[
\nu=\frac{\beta^{2}}{8\pi-\beta^{2}}>1
\]
where there are only solitons and no breathers. The asymptotic behavior of the
monodromy matrices is given by the formulae (\ref{ta}) and (\ref{tab}), if
$\underline{\beta}$ is a chargeless state (see appendix \ref{apSG}), with
$X(W)$ and $M_{\alpha\beta}^{\delta\gamma}(\theta)$ given by (\ref{XSG}) and
(\ref{MSG}).

\paragraph{The current and pseudo-potential:}

The current is related to the sine-Gordon field by \cite{Co}
\begin{equation}
\epsilon^{\mu\nu}\partial_{\nu}\varphi=-\frac{2\pi}{\beta}j^{\mu}. \label{Col}%
\end{equation}
The soliton-antisoliton form factors are (see e.g. \cite{KW,BFKZ})%
\begin{equation}
\langle\,0\,|\,\varphi(0)|\,\underline{\theta}\,\rangle_{s\bar{s}}=\frac{2\pi
}{\beta\cosh\frac{1}{2}\theta_{12}}\frac{\cosh\frac{1}{2}(i\pi-\theta)}%
{\cosh\frac{1}{2}(i\pi-\theta)/\nu}F(\theta_{12}) \label{sg2}%
\end{equation}
where $F(\theta)$ is the minimal form factor function (\ref{FSG}).

\subparagraph{Three-intermediate-particle $(r=3)$ approximation:}

We consider the structure function of theorem \ref{th} $W_{\beta,B}^{\alpha
,A}(p,q)=W_{\beta}^{\alpha}(p,q)$ and the special component where
$~\alpha=\beta=s=$ soliton (as in (\ref{Wgen})). Then using (\ref{sg2}),
(\ref{w}) the function $w_{s}^{s}(u)$ in (\ref{g3}) is for $\nu>1$%
\begin{equation}
w(u)=\frac{1+e^{-2u/\nu}}{\left(  1+\left(  2+2\cosh u\right)  /(-q^{2}%
/m^{2})\right)  ^{2}}\frac{2\tanh^{2}\frac{1}{2}u}{\left(  \cosh\frac{u}{\nu
}+\cos\frac{\pi}{\nu}\right)  }F(u)\,F(-u) \label{wSG}%
\end{equation}
because by (\ref{MSG}) $\mathbf{M}(\underline{\hat{\theta}})_{ss,\bar{s}%
}^{s.s\bar{s}}~\mathbf{M}(\underline{\hat{\theta}})_{s.s\bar{s}}^{ss,\bar{s}%
}+\mathbf{M}(\underline{\hat{\theta}})_{ss,\bar{s}}^{s.\bar{s}s}%
~\mathbf{M}(\underline{\hat{\theta}})_{s.\bar{s}s}^{ss,\bar{s}}=e^{-2\hat
{\theta}_{2}/\nu}+e^{-2\hat{\theta}_{1}/\nu}=1+e^{-2u/\nu}$ (see
(\ref{theta0}) and Lemma \ref{las}). The function $g(q^{2})$ for the
Sine-Gordon model is given by (\ref{g3}), (\ref{wSG}) and is plotted in Figure
\ref{fsg}.%
\begin{figure}
[h]
\begin{center}
\includegraphics[
height=2.3237in,
width=3.7628in
]%
{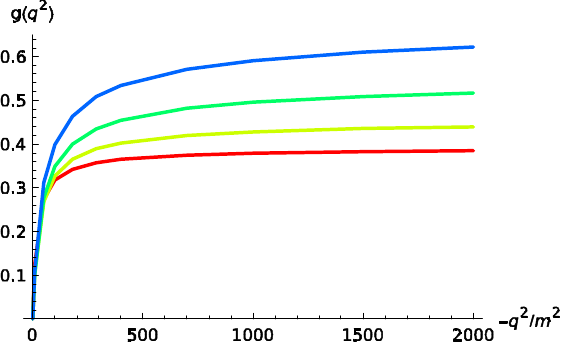}
\caption{Plots of $g(q^{2})$ versus $-q^{2}/m^{2}$ for Sine-Gordon, where
$\nu=$ 1.2 (red), 1.8 (yellow), 3 (green) and 6 (blue)}%
\label{fsg}%
\end{center}
\end{figure}

\subsection{Sinh-Gordon Model}

\label{ssinhG}

\label{1a}
The Sinh-Gordon S-matrix is obtained from the sine-Gordon breather S-matrix
for imaginary couplings $\beta$
\begin{equation}
S(\theta)=\frac{\sinh\theta+i\sin\pi\nu}{\sinh\theta-i\sin\pi\nu}%
\quad\text{with }-1\leq\nu=\frac{\beta^{2}}{8\pi-\beta^{2}}\leq0\,.\label{SSH}%
\end{equation}
Similarly, as above for the sine-Gordon case, we consider the structure
function%
\[
W(p,q)=-q^{2}\int_{\underline{\theta}}~\langle\theta|\varphi(0)|\underline
{\theta}\rangle~\langle\underline{\theta}|\varphi(0)|\theta\rangle~(2\pi
)^{2}\delta^{(2)}\left(  p+q-R\right)  .
\]
The form factors are non-zero only for $r=$ even number of particles.

\subparagraph{Two-intermediate-particle $(r=2)$ approximation:}

The function $g(q^{2})$ of (\ref{g}) is for $-1\leq\nu\leq0$%

\begin{equation}
g(q^{2})=2\frac{1}{\left(  1+1/(-q^{2}/m^{2})\right)  ^{2}}Z^{\varphi}
\label{gsinhg}%
\end{equation}
because $\langle0|\varphi(0)|\hat{\theta}\rangle=\sqrt{Z^{\varphi}}$ is
constant (see \cite{KW}) and by (\ref{Msh}) $\mathbf{M}(\hat{\theta
})~\mathbf{M}(\hat{\theta})=e^{-2\hat{\theta}}=1$ (see appendix \ref{apsinhG}).
The function $g(q^{2})$ is plotted in Figure \ref{fsh}.%
\begin{figure}
[h]
\begin{center}
\includegraphics[
height=1.8396in,
width=2.9776in
]%
{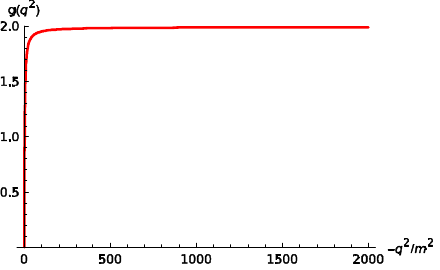}%
\caption{Plot of $g(q^{2})$ versus $-q^{2}/m^{2}$ for Sinh-Gordon, where $\nu=-1/2$}%
\label{fsh}%
\end{center}
\end{figure}

\subsection{Z(N) Model}
\label{sZN}
The $Z(N)$-Ising model in the scaling limit possesses $N-1$ types of
particles: $\alpha=1,\dots,N-1$ of charge $\alpha$ and $\bar{\alpha}=N-\alpha$
is the antiparticle of $\alpha$. The n-particle S-matrix factories in terms
of two-particle ones since the model is integrable.

\paragraph{The Z(N) model S-matrix}

The two-particle S-matrix for the $Z(N)$-Ising model has been proposed by
K\"{o}berle and Swieca \cite{KS}. The S-matrices for the scattering of two
particles of type $1$ and for antiparticle-particle are%
\begin{equation}
S(\theta)=S_{11}(\theta)\,=\frac{\sinh\frac{1}{2}(\theta+i\eta)}{\sinh\frac
{1}{2}(\theta-i\eta)},~\bar{S}(\theta)=S_{\bar{1}1}(\theta)=\frac{\cosh
\frac{1}{2}(\theta-i\eta)}{\cosh\frac{1}{2}(\theta+i\eta)}\label{SZN}%
\end{equation}
with $\eta=2\pi/N$.

\paragraph{The currents $J_{\pm1}^{\mu}(x)$}

are the first ones of the higher currents $J_{L}^{\mu}(x)$ (see \cite{BFK}).
They may be written in terms of the pseudo potentials $\varphi_{\pm1}(x)$ as%
\begin{equation}
J_{\pm1}^{\mu}(x)=\epsilon^{\mu\nu}\partial_{\nu}\varphi_{\pm1}(x) \label{Jpm}%
\end{equation}
In particular, we have the anti-particle particle form factors for $Z(3)$%
\begin{equation}
\langle0|\varphi_{\pm1}\left(  0\right)  |\theta_{1},\theta_{2}\rangle
_{\bar{1}1}=\mp\frac{e^{\pm\frac{1}{2}\left(  \theta_{1}+\theta_{2}\right)  }%
}{\cosh\frac{1}{2}\theta_{12}}\bar{F}(\theta_{12})~ \label{Fphipm}%
\end{equation}
with the minimal form factor function $\bar{F}(\theta)$ of (\ref{FZN}). The
eigenvalues of the "charges" $Q_{\pm1}=\int dxJ_{\pm1}^{0}(x)$ are
$e^{\pm\theta}$.

\subparagraph{Three-intermediate-particle $(r=3)$ approximation:}

\label{1}

We consider the structure function of the higher currents $J_{\pm 1}^{\mu}(x)$
similar as in (\ref{Wgen}) and the function of theorem \ref{th} $W_{\beta
,B}^{\alpha,A}(p,q)=W_{\beta}^{\alpha}(p,q)$ for the special component where
$~\alpha=\beta=1$
\[
W(p,q)=-q^{2}\int_{\underline{\theta}}~^{1}\langle\theta|\varphi
_{-1}(0)|\underline{\theta}\rangle_{\underline{\alpha}}~^{\underline{\alpha}%
}\langle\underline{\theta}|\varphi_{1}(0)|\theta\rangle_{1}~(2\pi)^{2}%
\delta^{(2)}\left(  p+q-R\right)
\]
Then using (\ref{w}) and (\ref{Fphipm}) the function $w(u)$ for $Z(3)$ in
(\ref{g3}) is%
\begin{equation}
w(u)=-q^{2}\frac{\left(  e^{-u}+1\right)  ^{2}}{\left(  1+\left(  2+2\cosh
u\right)  /(-q^{2}/m^{2})\right)  ^{2}}\frac{1}{\cosh^{2}\frac{1}{2}u}%
\frac{\bar{F}(-u)\bar{F}(u)}{\left(  F\left(  i\pi\right)  \right)  ^{2}}
\label{wZN}%
\end{equation}
because by (\ref{XMZN}) $\mathbf{M}(\underline{\hat{\theta}})_{\bar{1}%
1,1}^{1,\bar{1}1}~\mathbf{M}(\underline{\hat{\theta}})_{1,\bar{1}1}^{\bar
{1}1,1}=\mathbf{M}(\underline{\hat{\theta}})_{1\bar{1},1}^{1,1\bar{1}%
}~\mathbf{M}(\underline{\hat{\theta}})_{1,1\bar{1}}^{1\bar{1},1}=\left(
e^{-\hat{\theta}_{1}}+e^{-\hat{\theta}_{2}}\right)  ^{2}=\left(
e^{-u}+1\right)  ^{2}$ (see (\ref{theta0}) and Lemma \ref{las}). The function
$g(q^{2})$ for the $Z(3)$-model is given by (\ref{g3}), (\ref{wZN}) and is plotted
in Figure \ref{fzn}.
\begin{figure}
[h]
\begin{center}
\includegraphics[
height=2.2475in,
width=3.7502in
]%
{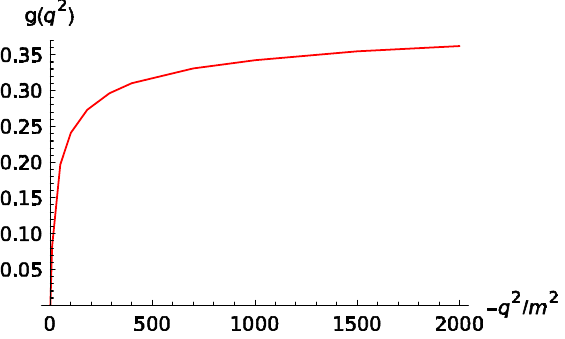}%
\caption{Plot of $g(q^{2})$ versus $-q^{2}/m^{2}$ for $Z(3)$}%
\label{fzn}%
\end{center}
\end{figure}

\section{Conclusion}

In this work, we have investigated the structure functions of two groups of
integrable quantum field theories in (1+1) dimensions. The first group
corresponds to S-matrices that are rational solutions of the Yang-Baxter
equation as for the $O(N)~\sigma$-model and the $SU(N)$ chiral Gross-Neveu
model. The second group corresponds to trigonometric solutions as for the
Sine-Gordon model and models without backward scattering as the Sinh-Gordon
and the $Z(N)$-Ising model in the scaling limit. The structure functions of
the first group, which describe asymptotically free quantum field theories,
exhibit a universal behavior $x^{-1}\ln^{-2}x$ at small Bjorken variable. This
result confirms the conjecture proposed by Balog and Weisz
\cite{Balog1,Balog2}. For the second group of S-matrices, the structure
functions show power behavior, in particular, proportional to $x.$ The
Sine-Gordon model shows a power-law dependence $x^{2/\nu-1}$ in the regime of
the coupling where there are only solitons. This behavior agrees well with the
experimental data from HERA and ZEUS (see \cite{BKS}) for certain values of
$q^{2}$ and small Bjorken $x$. This property makes the Sine-Gordon model
particularly interesting. Based on these observations, one can extend the
conjecture of Balog and Weisz \cite{Balog1,Balog2}, who suggested that the
universal behavior of structure functions at small $x$ may also appear in
four-dimensional QCD. The Sine-Gordon model therefore provides a concrete
two-dimensional candidate for the longitudinal sector of QCD (see \cite{BKS}
for further discussions).

\section{Acknowledgements}

H.B. acknowledges Armenian HESC grant 21AG-1C024 for financial support. H.B.
and M.K.  thank A. Sedrakyan for collaboration. A.F. acknowledges support from
CNPq (Conselho Nacional de
Desenvolvimento Cient\'{\i}fico e Tecnol\'{o}gico). M.K. thanks Felix von
Oppen for hospitality at the Institut f\"{u}r Theoretische Physik, Freie
Universit\"{a}t Berlin.

\appendix

\section*{Appendix}

\addcontentsline{toc}{part}{Appendix}

\renewcommand{\theequation}{\mbox{\Alph{section}.\arabic{equation}}} \setcounter{equation}{0}

\section{New variables}

\label{a0}

We use the transformation $\underline{\theta}\rightarrow\underline{u},\Lambda
$, defined by%
\begin{align*}
u_{i}  &  =\theta_{i}-\theta_{i+1},~1\leq i\leq r-1\\
R^{\pm}/m  &  =\mu(\underline{u})e^{\pm\Lambda}=e^{\pm\theta_{1}}+\dots
+e^{\pm\theta_{r}}%
\end{align*}
with
\[
\mu^{2}(\underline{u})=R^{2}/m^{2}=r+2\sum_{1\leq i<j\leq r}
\cosh\left(u_{i}+\dots+u_{j-1}\right)  .
\]
The Jacobian satisfies $\det\left(  \partial\left(  \underline{u}%
,\Lambda\right)  /\partial\left(  \underline{\theta}\right)  \right)  =1$. For
details, in particular, for the inverse transformation $\underline{\theta
}\left(  \underline{u},\Lambda\right)  $ see \cite{Balog2}%
\begin{equation}
\theta_{i}\left(  \underline{u},\Lambda\right)  =u_{i}+\dots+u_{r-1}%
+\Lambda-v^{+}(\underline{u})+v^{-}(\underline{u}) \label{t}%
\end{equation}
where%
\begin{equation}
v^{\pm}(\underline{u})=\tfrac{1}{2}\ln\left(
\sum_{i=1}^{r-1}
\,e^{\pm\left(  u_{i}+\dots+u_{r-1}\right)  }+1\right)  \label{vpm}%
\end{equation}
with%
\[
v^{+}(\underline{u})+v^{-}(\underline{u})=\tfrac{1}{2}\ln\mu^{2}(\underline
{u}).
\]
Finally we obtain for the function $W_{r}(p,q)$ of (\ref{WW})%
\[
W_{r}(p,q)=\frac{1}{4\left(  4\pi\right)  ^{r-2}}\int d\underline{u}%
\int_{-\infty}^{\infty}d\Lambda~J_{r}\left(  \underline{\theta}(\underline
{u},\Lambda),\theta\right)  ~\delta^{(2)}\left(  p+q-R\right)
\]
where $\int d\underline{u}=\int_{0}^{\infty}du_{1}\dots\int_{0}^{\infty
}du_{r-1}$.

\paragraph{$\Lambda$ and $u_{r-1}$ integrations:}

We may rewrite $\delta^{(2)}\left(  p+q-R\right)  $ as%
\[
\delta^{(2)}\left(  p+q-R\right)  =\frac{2}{m^{2}}\delta\left(  \mu
^{2}(\underline{u})-\left(  p+q\right)  ^{2}/m^{2}\right)  \delta\left(
\Lambda-\Lambda_{0}\right)
\]
with $\Lambda_{0}=\frac{1}{2}\ln\frac{R^{+}}{R^{-}}=\frac{1}{2}\ln
\frac{\left(  p+q\right)  ^{+}}{\left(  p+q\right)  ^{-}}$, note that
$\frac{\partial\left(  R^{0},R^{1}\right)  }{\partial\left(  \mu^{2}%
,\Lambda\right)  }=\frac{1}{2}m^{2}$. Then we perform the $\Lambda$ and the
$u_{r-1}$ integrations%
\begin{align*}
W_{r}(p,q)  &  =\frac{1}{\left(  4\pi\right)  ^{r-2}}\frac{1}{2m^{2}}\int
d\underline{u}~J_{r}\left(  \underline{\theta}(\underline{u},\Lambda
_{0})\right)  \delta\left(  \mu^{2}(\underline{u})-\left(  p+q\right)
^{2}/m^{2}\right) \\
&  =\frac{1}{\left(  4\pi\right)  ^{r-2}}\frac{1}{4m^{2}}\int d\underline
{u}_{r-1}~\frac{1}{\sum_{i=1}^{r-1}\sinh\left(  u_{i}+\dots+u_{r-1}%
^{(0)}\right)  }J_{r}\left(  \underline{\theta}(\underline{u}^{(0)}%
,\Lambda_{0})\right)
\end{align*}
where $\underline{u}^{(0)}=u_{1},\dots,u_{r-2},u_{r-1}^{(0)},$ and
$u_{r-1}^{(0)}$ satisfies
\begin{equation}
\mu^{2}(\underline{u}^{(0)})=\left(  p+q\right)  ^{2}/m^{2}. \label{u0}%
\end{equation}
It has been used that
\[
\frac{\partial}{\partial u_{r-1}}\mu^{2}(\underline{u})=2\sum_{i=1}^{r-1}%
\sinh\left(  u_{i}+\dots+u_{r-1}\right)  \,.
\]

\paragraph{$\Lambda_{0}$ and $u_{r-1}^{(0)}$ in terms of $x$ and $\kappa$:}

We take $p=\binom{m}{0}$, then we have%
\begin{align*}
\left(  p+q\right)  ^{\pm}/m  &  =\left(  \left(  m+q^{0}\right)  \pm
q^{1}\right)  /m=1+2\kappa^{2}\frac{1}{x}\pm2\kappa\sqrt{\kappa^{2}+x^{2}%
}\frac{1}{x}\\
&  =1+2\kappa^{2}\frac{1}{x}\left(  1\pm\sqrt{1+x^{2}/\kappa^{2}}\right)
\end{align*}
because%
\[
q^{0}/m=pq/m^{2}=2\kappa^{2}\frac{1}{x},~q^{1}/m=\sqrt{\left(  q^{0}\right)
^{2}-q^{2}}/m=2\kappa^{2}\frac{1}{x}\left(  \sqrt{1+x^{2}/\kappa^{2}}\right)
.
\]
Therefore%
\begin{equation}
\Lambda_{0}=\frac{1}{2}\ln\frac{\left(  p+q\right)  ^{+}}{\left(  p+q\right)
^{-}}=\frac{1}{2}\ln\frac{1+2\kappa^{2}\frac{1}{x}\left(  1+\sqrt
{1+x^{2}/\kappa^{2}}\right)  }{1+2\kappa^{2}\frac{1}{x}\left(  1-\sqrt
{1+x^{2}/\kappa^{2}}\right)  } \label{l0}%
\end{equation}
and $u_{r-1}^{(0)}$ is a solution of (\ref{u0})
\begin{align}
\mu^{2}(\underline{u}^{(0)})  &  =1+\mu_{r-1}^{2}(\underline{u}_{r-1})+2%
{\textstyle\sum\nolimits_{1\leq i<r}}
\cosh\left(  u_{i}+\dots+u_{r-1}^{(0)}\right) \nonumber\\
&  =\left(  p+q\right)  ^{2}/m^{2}=1+2pq/m^{2}+q^{2}/m^{2}=1+\left(  \frac
{1}{x}-1\right)  4\kappa^{2} \label{u01}%
\end{align}
with $\mu_{r-1}^{2}(\underline{u}_{r-1})=r-1+2%
{\textstyle\sum\nolimits_{1\leq i<j\leq r-1}}
\cosh\left(  u_{i}+\dots+u_{j-2}\right)  $ analogously to (\ref{mu}).

\paragraph{ Small $x$:}

\begin{lemma}
\label{lu-1}After the $\Lambda$ and the $u_{r-1}$ integrations the rapidities
of eq. (\ref{t}) $\underline{\theta}(\underline{u}^{(0)},\Lambda_{0})$ satisfy
for $p=\binom{m}{0}$ and for small $x$
\begin{align*}
\theta_{i}(\underline{u}^{(0)},\Lambda_{0})  &  =u_{i}+\dots+u_{r-2}%
+W,~i=1,\dots,r-1\\
\theta_{r}\left(  \underline{u}^{(0)},\Lambda_{0}\right)   &  =\epsilon
\end{align*}
with%
\begin{align*}
W  &  =u_{r-1}^{(0)}+\Lambda_{0}-v_{+}+v_{-}=-\ln x+O(1)\\
\epsilon &  =\Lambda_{0}-v_{+}+v_{-}=x\left(  1+\frac{1}{4\kappa^{2}}\mu
_{r-1}^{2}\right)  +O(x^{2}).
\end{align*}
From (\ref{Wr}) we obtain for small $x$%
\[
W_{r}(p,q)=\frac{1}{\left(  4\pi\right)  ^{r-2}}\frac{x}{4\kappa^{2}}\frac
{1}{2m^{2}}\int d\underline{u}_{r-1}~J_{r}\left(  \underline{\theta
}(\underline{u}^{(0)},\Lambda_{0}),0\right)  +O(x^{2})
\]

\end{lemma}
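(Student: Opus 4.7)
The plan is to prove the two assertions in sequence: the small-$x$ expansions of $W$ and $\epsilon$ first, then the reduction of $W_r(p,q)$. The bulk structural form $\theta_i(\underline{u}^{(0)},\Lambda_0) = u_i+\dots+u_{r-2}+W$ for $i\leq r-1$ and $\theta_r=\epsilon$ comes directly from (\ref{t}): evaluating that formula at $\underline{u}^{(0)},\Lambda_0$ produces exactly the definitions $W=u_{r-1}^{(0)}+\Lambda_0-v_++v_-$ and $\epsilon=\Lambda_0-v_++v_-$ given in the statement. What requires work is the asymptotic analysis.

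First I would expand $\Lambda_0$ from (\ref{l0}) using $\sqrt{1+x^2/\kappa^2}=1+x^2/(2\kappa^2)+O(x^4)$; the argument of the logarithm is $4\kappa^2/x+(1+4\kappa^2)+O(x)$, giving $\Lambda_0=-\tfrac12\ln x+\tfrac12\ln(4\kappa^2)+x(1+4\kappa^2)/(8\kappa^2)+O(x^2)$. Next, writing $a:=u_{r-1}^{(0)}$ and introducing
\begin{equation*}
S=\sum_{i=1}^{r-1}e^{u_i+\dots+u_{r-2}},\qquad S'=\sum_{i=1}^{r-1}e^{-(u_i+\dots+u_{r-2})},
\end{equation*}
which depend only on $u_1,\dots,u_{r-2}$, I would decompose $\mu^2(\underline{u}^{(0)})=1+\mu_{r-1}^2+e^{a}S+e^{-a}S'$ and solve (\ref{u01}) iteratively, obtaining $e^{a}=\frac{4\kappa^2}{xS}\bigl[1-x(4\kappa^2+\mu_{r-1}^2)/(4\kappa^2)\bigr]+O(x)$ and $e^{-a}=xS/(4\kappa^2)+O(x^2)$. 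Inserting these into (\ref{vpm}) yields $v_+=\tfrac12 a+\tfrac12\ln S+x/(8\kappa^2)+O(x^2)$ (using $e^aS$ large) and $v_-=xSS'/(8\kappa^2)+O(x^2)$ (using $e^{-a}S'$ small).

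Assembling, the logs in $W=a+\Lambda_0-v_++v_-$ collapse to $-\ln x+O(1)$ as claimed, and those in $\epsilon=\Lambda_0-v_++v_-$ cancel exactly, leaving $\epsilon=x+x(\mu_{r-1}^2+SS')/(8\kappa^2)+O(x^2)$. The main obstacle I anticipate is the combinatorial identity
\begin{equation*}
SS'=(r-1)+2\sum_{1\leq i<j\leq r-1}\cosh(u_i+\dots+u_{j-1})=\mu_{r-1}^2,
\end{equation*}
obtained by expanding the double sum and pairing each index pair $(i,j)$ with its transpose; this is precisely what converts $\mu_{r-1}^2+SS'$ into $2\mu_{r-1}^2$ and produces $\epsilon=x(1+\mu_{r-1}^2/(4\kappa^2))+O(x^2)$. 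With the rapidity asymptotics in hand, the small-$x$ form of $W_r$ follows by observing $\sum_{i=1}^{r-1}\sinh(u_i+\dots+u_{r-1}^{(0)})=\tfrac12(e^{a}S-e^{-a}S')=2\kappa^2/x+O(1)$, so that $1/\sum\sinh$ supplies the explicit factor $x/(2\kappa^2)$ in the integrand of (\ref{Wr}); since $\theta_r=\epsilon\to 0$ one may evaluate $J_r$ at $0$ in its last slot, reproducing the stated expression for $W_r(p,q)$.
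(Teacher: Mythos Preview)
Your proposal is correct and follows essentially the same route as the paper: expand $\Lambda_0$, solve (\ref{u01}) asymptotically for $u_{r-1}^{(0)}$, expand $v_\pm$, and combine. The only cosmetic difference is that the paper uses the identity $v_+ + v_- = \tfrac12\ln\mu^2(\underline{u}^{(0)})$ (together with the known value of $\mu^2$ from (\ref{u01})) to bypass a separate expansion of $v_+$, writing $\epsilon=\Lambda_0-\tfrac12\ln\mu^2+2v_-$; your direct computation of $v_+$ and the identity $SS'=\mu_{r-1}^2$ amount to the same thing, since $SS'=e^{2v_+^{(r-1)}}e^{2v_-^{(r-1)}}=\mu_{r-1}^2$ is precisely the $(r-1)$-particle version of that relation.
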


\begin{proof}
By (\ref{u01})%
\[
\mu^{2}(\underline{u}^{(0)})=\frac{4\kappa^{2}}{x}+1-4\kappa^{2}\,,
\]
and further for $x$ small $u_{r-1}^{(0)}\rightarrow\infty$ and
\begin{align*}
\mu^{2}(\underline{u}^{(0)}) &  =1+\mu_{r-1}^{2}(\underline{u}_{r-1})+%
{\textstyle\sum\nolimits_{i=1}^{r-1}}
e^{u_{i}+\dots+u_{r-1}^{(0)}}+O(e^{-u_{r-1}^{(0)}})\\
&  \Rightarrow~u_{r-1}^{(0)}=\ln\frac{4\kappa^{2}}{x}-2v_{+}^{(r-1)}+O(x)
\end{align*}
with analogously to (\ref{vpm}) $v_{(r-1)}^{\pm}=\frac{1}{2}\ln\left(
{\textstyle\sum\nolimits_{i=1}^{r-2}}
e^{\pm\left(u_{i}+\dots+u_{r-2}\right)}+1\right)$. Equation (\ref{l0})
implies%
\begin{equation}
\Lambda_{0}=\frac{1}{2}\ln\frac{4\kappa^{2}}{x}+\left(  \frac{1}{8\kappa^{2}%
}+\frac{1}{2}\right)  x+O\left(  x^{2}\right)  \,.
\end{equation}
Similarly we get
\begin{equation}
v_{-}\left(  \underline{u}^{(0)}\right)  =\frac{x}{8\kappa^{2}}\mu_{r-1}%
^{2}+O(x^{2}).\nonumber
\end{equation}
Finally
\[
W=u_{r-1}^{(0)}+\Lambda_{0}-\frac{1}{2}\ln\mu^{2}(\underline{u})+2v_{-}=-\ln
x+O(1)
\]
and%
\[
\epsilon=\Lambda_{0}-\frac{1}{2}\ln\mu^{2}(\underline{u}^{(0)})+2v_{-}\left(
\underline{u}^{(0)}\right)  =x\left(  1+\frac{1}{4\kappa^{2}}\mu_{r-1}%
^{2}\right)+O(x^{2}).
\]
To get $W_{r}(p,q)$ for small $x$ we use that (\ref{u01}) implies
\[\left(
2\sum_{i=1}^{r-1}\sinh\left(  u_{i}+\dots+u_{r-1}^{(0)}\right)  \right)
^{-1}=\frac{x}{4\kappa^{2}}+O(x^{2}).
\]
\end{proof}

\section{O(N) $\sigma$-model}

\label{apON}

The two particle S-matrix of the $O(N)~\sigma$-model $\mathbf{S}%
(\theta)=b(\theta)\mathbf{1}+c(\theta)\mathbf{P}+d(\theta)\mathbf{K}$ reads in
terms of the components as \cite{ZZ3}%
\begin{equation}
S_{\alpha\beta}^{\delta\gamma}(\theta)=b(\theta)\delta_{\alpha}^{\gamma}%
\delta_{\beta}^{\delta}+c(\theta)\delta_{\alpha}^{\delta}\delta_{\beta
}^{\gamma}+d(\theta)\delta^{\delta\gamma}\delta_{\alpha\beta} \label{son}%
\end{equation}
where $\alpha,\beta,\gamma,\delta=1,\dots,N$ denote the fundamental
$O(N)$-vector particles. The amplitudes are given by
\begin{align*}
a(\theta)  &  =b(\theta)+c(\theta)=-\exp\left(  -2\int_{0}^{\infty}\frac
{dt}{t}\frac{e^{-t\nu}+e^{-t}}{1+e^{-t}}\sinh t\frac{\theta}{i\pi}\right)
~,~~\nu=\frac{2}{N-2}\\
b(\theta)  &  =a(\theta)\frac{\theta}{\theta-i\pi\nu},~c(\theta)=-\frac
{i\pi\nu}{\theta}b(\theta),~d(\theta)=\frac{i\pi\nu}{\theta-i\pi}b(\theta)\,.
\end{align*}
The asymptotic behavior for $\theta\rightarrow\pm\infty$
\begin{equation}
\mathbf{S}(\theta)=\mathbf{1}-\frac{i\pi\nu}{\theta}\left(  \mathbf{P}%
-\mathbf{K}\right)  +O(\theta^{-2}) \label{sona}%
\end{equation}
is obtained by the formula for $x\rightarrow\infty$%
\begin{equation}
\int\limits_{0}^{\infty}\frac{1}{t}g(t)\left(  \sin tx\right)  dt=g\left(
0\right)  \frac{1}{2}\pi+g^{\prime}\left(  0\right)  \frac{1}{x}-\frac{1}%
{3}g^{(3)}\left(  0\right)  \frac{1}{x^{3}}+O(x^{-5}). \label{as}%
\end{equation}
This implies the formulae (\ref{ta}) and (\ref{tab}) with \cite{dVK,BFK5}%
\begin{equation}
X(W)=\mathbf{-}i\pi\nu\frac{1}{W}~\text{and }M_{\alpha\beta}^{\delta\gamma
}=\left(  \mathbf{P-K}\right)  _{\alpha\beta}^{\delta\gamma}=\delta_{\alpha
}^{\delta}\delta_{\beta}^{\gamma}-\delta^{\delta\gamma}\delta_{\alpha\beta}\,.
\label{XMON}%
\end{equation}
The minimal form factor function corresponding to $S_{-}(\theta)=b(\theta
)-c(\theta)$ is \cite{KW}
\begin{equation}
F_{-}(\theta)=\exp\left(  \int_{0}^{\infty}\frac{dt}{t\sinh t}\left(
\frac{e^{-t\nu}-1}{1+e^{t}}\right)  \left(  1-\cosh t(1-\theta/(i\pi))\right)
\right)  . \label{FON}%
\end{equation}

\noindent{\bf The proof Lemma \ref{las} for $O(N)$:}

\begin{proof}
We set $\underline{\hat{\alpha}}=\alpha_{1},\dots,\alpha_{r-1}$. For
$p=\binom{m}{0}$ or $\theta=0$ the form factor equations (\ref{fii}) and
(\ref{fiii}) imply for $\theta_{r}\rightarrow0$%
\begin{align*}
^{\alpha}\langle\theta|\mathcal{O}(0)|\underline{\theta}\rangle_{\underline
{\alpha}}  & =F_{\underline{\hat{\alpha}}\alpha_{r}\alpha^{\prime}%
}^{\mathcal{O}}(\underline{\theta}_{r-1},\theta_{r},-i\pi)\mathbf{C}%
^{\alpha^{\prime}\alpha}\\
& \overset{\theta_{r}\rightarrow0}{\rightarrow}\frac{2i}{\theta_{r}%
}F_{\underline{\hat{\alpha}}^{\prime}}^{\mathcal{O}}(\underline{\theta}%
_{r-1})\left(  \mathbf{1}-S(\theta_{1r})\dots S(\theta_{r-1r})\right)
_{\underline{\hat{\alpha}}\alpha_{r}}^{\alpha\underline{\hat{\alpha}}^{\prime
}}%
\end{align*}
where $\underline{\theta}_{r-1}=(\theta_{1},\dots,\theta_{r-1})$.

 Further we
set $\underline{\theta}_{r-1}=\underline{\hat{\theta}}+W=(\hat{\theta}%
_{1}+W,\dots,\hat{\theta}_{r-1}+W)$. If in addition $W\rightarrow\infty$, for
$\mathcal{O=}$ Lorentz scalar we get%
\begin{multline*}
^{\alpha}\langle\theta|\mathcal{O}(0)|\underline{\theta}\rangle_{\underline
{\alpha}}\overset{\theta_{r}\rightarrow0}{\rightarrow}\overset{W\rightarrow
\infty}{\rightarrow}\frac{2i}{\theta_{r}}\langle0|\mathcal{O}(0)|\underline
{\hat{\theta}}\rangle_{\underline{\hat{\alpha}}^{\prime}}\left(
\mathbf{1}-\left(  \mathbf{1}+X(W)\mathbf{M}\right)  \right)  _{\underline
{\hat{\alpha}}\alpha_{r}}^{\alpha\underline{\hat{\alpha}}^{\prime}}\\
=-\frac{2i}{\theta_{r}}X(W)\langle0|\mathcal{O}(0)|\underline{\hat{\theta}%
}\rangle_{\underline{\hat{\alpha}}^{\prime}}\mathbf{M}_{\underline{\hat
{\alpha}}\alpha_{r}}^{\alpha\underline{\hat{\alpha}}^{\prime}}%
\end{multline*}
because of (\ref{ta}). It has been used, that the form factor equation
(\ref{fv}) implies for a Lorentz scalar $\langle0|\mathcal{O}|\underline
{\hat{\theta}}+W\rangle_{\underline{\hat{\alpha}}}=\langle0|\mathcal{O}%
|\underline{\hat{\theta}}\rangle_{\underline{\hat{\alpha}}}$.

Similarly we have
\begin{multline*}
^{\underline{\alpha}}\langle\underline{\theta}|\mathcal{O}^{\prime}%
(0)|\theta\rangle_{\beta}\overset{\theta_{r}\rightarrow0}{\rightarrow}%
\overset{W\rightarrow\infty}{\rightarrow}\frac{2i}{-\theta_{r}}\left(
\mathbf{1}-\left(  \mathbf{1}+X^{\ast}(W)\mathbf{M}\right)  \right)
_{\beta\underline{\hat{\alpha}}^{\prime}}^{\underline{\hat{\alpha}}\alpha_{r}%
}~^{\underline{\hat{\alpha}}^{\prime}}\langle\underline{\hat{\theta}%
}|\mathcal{O}^{\prime}(0)|0\rangle\\
=-\frac{2i}{-\theta_{r}}X^{\ast}(W)\mathbf{M}_{\beta\underline{\hat{\alpha}%
}^{\prime}}^{\underline{\hat{\alpha}}\alpha_{r}}~^{\underline{\hat{\alpha}%
}^{\prime}}\langle\underline{\hat{\theta}}|\mathcal{O}^{\prime}(0)|0\rangle
\end{multline*}
and (\ref{XX}).
\end{proof}

\section{The $SU(N)$ chiral Gross-Neveu model}

\label{apSUN}

The S-matrix of two particles $S(\theta)=\mathbf{1}b(\theta)+\mathbf{P}%
c(\theta)$ and of a particle and an anti-particle $\bar{S}(\theta
)=(-1)^{N-1}\left(  \mathbf{1}b(\pi i-\theta)+\mathbf{K}c(\pi i-\theta
)\right)  $ are in terms of the components \cite{BKKW,BFK1,BFK2}
\begin{align}
S_{\alpha\beta}^{\delta\gamma}(\theta)  &  =\delta_{\alpha}^{\gamma}%
\delta_{\beta}^{\delta}b(\theta)+\delta_{\alpha}^{\delta}\delta_{\beta
}^{\gamma}c(\theta)\label{sun}\\
\bar{S}_{\bar{\alpha}\beta}^{\delta\bar{\gamma}}(\theta)  &  =(-1)^{N-1}%
\left(  \delta_{\bar{\alpha}}^{\bar{\gamma}}\delta_{\beta}^{\delta}\,b(\pi
i-\theta)+\mathbf{C}^{\delta\bar{\gamma}}\mathbf{C}_{\bar{\alpha}\beta}\,c(\pi
i-\theta)\right)  \label{sunb}%
\end{align}
where $\alpha,\beta,\gamma,\delta=1,\dots,N$ denote the fundamental
$SU(N)$-vector particles. Du to Swieca et al. \cite{KuS,KKS} the particles
satisfy an anyonic statistics and the antiparticle $\bar{\alpha}$ of a
particle $\alpha$ is to be identified with the bound state of $N-1$ particles
$\bar{\alpha}=(\alpha_{1},\dots,\alpha_{N-1})$ with $\alpha_{i}\neq\alpha$ and
$\alpha_{1}<\dots<\alpha_{N-1}$. As a consequence, the crossing relation
(\ref{sunb}) has this unusual factor $(-1)^{N-1}$. Also the asymptotic
formulae (\ref{ta}) and (\ref{tab}) are modified. However, for charge zero
operators the results (\ref{XX}), (\ref{W1}), (\ref{f}) and (\ref{g}) hold
unchanged.
The charge conjugation matrix
in (\ref{sunb}) for the bound state $\bar{\alpha}=(\alpha_{1}\dots\alpha
_{N-1})$ is defined by (see \cite{BFK1})
\begin{align}
\mathbf{C}_{\bar{\alpha}\alpha}\,  &  =\epsilon_{\alpha_{1}\dots\alpha
_{N-1}\alpha},~~\mathbf{C}_{\alpha\bar{\alpha}}=\epsilon_{\alpha\alpha
_{1}\dots\alpha_{N-1}}\nonumber\\
\mathbf{C}^{\bar{\alpha}\alpha}\,\,  &  =(-1)^{N-1}\,\epsilon^{\alpha_{1}%
\dots\alpha_{N-1}\alpha},~~\mathbf{C}^{\alpha\bar{\alpha}}=(-1)^{N-1}%
\,\epsilon^{\alpha\alpha_{1}\dots\alpha_{N-1}} \label{c}%
\end{align}
with $\epsilon_{\alpha_{1}\dots\alpha_{N}}$ and $\epsilon^{\alpha_{1}%
\dots\alpha_{N}}$ being total anti-symmetric and $\epsilon_{1\dots N}=$
$\epsilon^{1\dots N}=1$.

Some formulae for the $SU(N)$ chiral Gross-Neveu model are more complicated
compared to the $O(N)~\sigma$-model:

\begin{enumerate}
\item There are particles with different masses (bound states).

\item The statistics of the particles is anyonic.

\item The form factor equations contain some additional phase factors.
\end{enumerate}

The definition of the momentum $R=\sum_{i=1}^{r}p_{i},~p_{i}=m_{i}\binom
{\cosh\theta_{i}}{\sinh\theta_{i}}$ changes and therefore%
\begin{align*}
\mu^{2}(\underline{u})  &  =\sum_{i=1}^{r}m_{i}^{2}/m^{2}+2%
{\textstyle\sum\nolimits_{1\leq i<j\leq r}}
m_{i}m_{j}/m^{2}\cosh\left(  u_{i}+\dots+u_{j-1}\right) \\
v^{\pm}(\underline{u})  &  =\frac{1}{2}\ln\left(
{\textstyle\sum\nolimits_{i=1}^{r}}
\,m_{i}/m\,e^{\pm\left(  u_{i}+\dots+u_{r-1}\right)  }+1\right)
\end{align*}

\paragraph{The form factor equations for $SU(N)$ are}

\begin{itemize}
\item[(i)] The Watson's equations describe the symmetry property under the
permutation of both, the variables $\theta_{i},\theta_{j}$ and the spaces
$i,j=i+1$ at the same time
\begin{equation}
F_{\dots ij\dots}^{\mathcal{O}}(\dots,\theta_{i},\theta_{j},\dots)=F_{\dots
ji\dots}^{\mathcal{O}}(\dots,\theta_{j},\theta_{i},\dots)\,S_{ij}(\theta_{ij})
\label{fiSU}%
\end{equation}
for all possible arrangements of the $\theta$'s.

\item[(ii)] The crossing relation implies a periodicity property under the
cyclic permutation of the rapidity variables and spaces for the connected part%
\begin{multline}
^{\bar{1}}\langle\,\theta_{1}\,|\,\mathcal{O}(0)\,|\,\theta_{2},\dots
,\theta_{n}\,\rangle_{2\dots n}^{{conn.}}\\
=F_{1\ldots n}^{\mathcal{O}}(\theta_{1}+i\pi,\theta_{2},\dots,\theta
_{n})\,\dot{\sigma}_{1}\mathbf{C}^{\bar{1}1}=F_{2\ldots n1}^{\mathcal{O}%
}(\theta_{2},\dots,\theta_{n},\theta_{1}-i\pi)\mathbf{C}^{1\bar{1}}
\label{fiiSU}%
\end{multline}
where $\dot{\sigma}_{\alpha}^{\mathcal{O}}$ takes into account the statistics
of the particle $\alpha$ with respect to $\mathcal{O}$. The charge conjugation
matrix $\mathbf{C}^{\bar{1}1}$ is defined by (\ref{c}).

\item[(iii)] There are poles determined by one-particle states in each
sub-channel given by a subset of particles of the state. In particular, the
function $F_{\underline{\alpha}}^{\mathcal{O}}(\underline{\theta})$ has a pole
at $\theta_{12}=i\pi$ such that
\begin{equation}
\operatorname*{Res}_{\theta_{12}=i\pi}F_{1\dots n}^{\mathcal{O}}(\theta
_{1},\dots,\theta_{n})=2i\,\mathbf{C}_{12}\,F_{3\dots n}^{\mathcal{O}}%
(\theta_{3},\dots,\theta_{n})\left(  \mathbf{1}-\dot{\sigma}_{2}S_{2n}\dots
S_{23}\right)  \,. \label{fiiiSU}%
\end{equation}

\item[(iv)] If there are also bound states in the model the function
$F_{\underline{\alpha}}^{\mathcal{O}}({\underline{\theta}})$ has additional
poles. If for instance the particles 1 and 2 form a bound state (12), there is
a pole at $\theta_{12}=i\eta,~(0<\eta<\pi)$ such that
\begin{equation}
\operatorname*{Res}_{\theta_{12}=i\eta}F_{12\dots n}^{\mathcal{O}}(\theta
_{1},\theta_{2},\dots,\theta_{n})\,=F_{(12)\dots n}^{\mathcal{O}}%
(\theta_{(12)},\dots,\theta_{n})\,\sqrt{2}\Gamma_{12}^{(12)} \label{fivSU}%
\end{equation}
where the bound state intertwiner $\Gamma_{12}^{(12)}$ and the values of
$\theta_{1},~\theta_{2},~\theta_{(12)}$ and $\eta$ are given in \cite{K1,BK}.

\item[(v)] Naturally, since we are dealing with relativistic quantum field
theories we finally have
\begin{equation}
F_{1\dots n}^{\mathcal{O}}(\theta_{1},\dots,\theta_{n})=e^{s\mu}\,F_{1\dots
n}^{\mathcal{O}}(\theta_{1}+\mu,\dots,\theta_{n}+\mu) \label{fvSU}%
\end{equation}
if the local operator transforms under Lorentz transformations as
$\mathcal{O}\rightarrow e^{s\mu}\mathcal{O}$ where $s$ is the
\textquotedblleft spin\textquotedblright\ of $\mathcal{O}$.
\end{itemize}

Here $\dot{\sigma}_{1}=\rho\sigma_{1}^{\mathcal{O}}$ is a phase factor, where
$\sigma_{1}^{\mathcal{O}}$ is the statistics factor of the operator
$\mathcal{O}(x)$ with respect to the particle $1$ and $\rho$ is a sign factor
due to the unusual crossing relation of the S-matrix which is determined as
follows. The statistics factors in (ii) and (iii) are not arbitrary,
consistency implies that both are the same and
\[
\dot{\sigma}_{1}^{\mathcal{O}}\dot{\sigma}_{\bar{1}}^{\mathcal{O}}=\left(
\dot{\sigma}_{1}^{\mathcal{O}}\right)  ^{N}=(-1)^{(N-1)Q^{\mathcal{O}}}%
\]
with the solution for a fundamental particle $\alpha$%
\begin{align}
\dot{\sigma}_{\alpha}(n)  &  =\sigma_{1}^{\mathcal{O}}\rho\label{sigmadot}\\
\sigma_{1}^{\mathcal{O}}  &  =e^{i\pi(1-1/N)Q^{\mathcal{O}}}\nonumber\\
\rho &  =(-1)^{(N-1)+(1-1/N)\left(  n-Q^{\mathcal{O}}\right)  }=\pm
1\nonumber\\
\dot{\sigma}_{\bar{\alpha}}(n)  &  =\left(  \dot{\sigma}_{\alpha}(n)\right)
^{N-1}%
\end{align}

\paragraph{Asymptotic behavior:}

The amplitudes in (\ref{sun}) and (\ref{sunb}) are given by
\cite{BKKW,BFK1,BFK2}%
\begin{align}
a(\theta)  &  =b(\theta)+c(\theta)=\frac{\Gamma\left(  -\frac{\theta}{2\pi
i}\right)  \Gamma\left(  1-\frac{1}{N}+\frac{\theta}{2\pi i}\right)  }%
{\Gamma\left(  \frac{\theta}{2\pi i}\right)  \Gamma\left(  1-\frac{1}{N}%
-\frac{\theta}{2\pi i}\right)  },~\nu=\frac{2}{N}\label{aSUN}\\
\tilde{b}(\theta)  &  =\frac{b(\theta)}{a(\theta)}=\frac{\theta}{\theta
-i\pi\nu},~\tilde{c}(\theta)=\frac{c(\theta)}{a(\theta)}=\frac{-i\pi\nu
}{\theta-i\pi\nu}=-\frac{i\pi\nu}{\theta}\tilde{b}(\theta)\nonumber
\end{align}
For $\theta\rightarrow\pm\infty$ they satisfy%
\begin{align}
&  a(\theta)\overset{\theta\rightarrow\pm\infty}{\rightarrow}e^{\mp
i\pi\left(  1-\frac{1}{2}\nu\right)  }e^{-i\pi\nu\left(  1-\frac{1}{2}%
\nu\right)  \frac{1}{\theta}}\nonumber\\
&  b(\theta)\overset{\theta\rightarrow\pm\infty}{\rightarrow}e^{\mp
i\pi\left(  1-\frac{1}{2}\nu\right)  }\left(  1+\frac{1}{2}i\nu^{2}\pi\frac
{1}{\theta}\right) \label{ba}\\
&  c(\theta)/b(\theta)\overset{\theta\rightarrow\pm\infty}{\rightarrow}%
-i\pi\nu\frac{1}{\theta}.\nonumber
\end{align}
where again the asymptotic formula (\ref{as}) was used. This implies for the
particle-particle S-matrix%
\[
S_{\alpha\beta}^{\delta\gamma}(\theta)\overset{\theta\rightarrow\pm\infty
}{\rightarrow}b(\theta)\left(  \mathbf{1}\,-\frac{i\pi\nu}{\theta}M\right)
_{\alpha\beta}^{\delta\gamma}~,~~M_{\alpha\beta}^{\delta\gamma}\mathbf{=P}%
_{\alpha\beta}^{\delta\gamma}%
\]
where here $b(\theta)$ is given by (\ref{ba}). Similarly, the particle -
anti-particle S-matrix satisfies
\[
\bar{S}_{\bar{\alpha}\beta}^{\delta\bar{\gamma}}(\theta)=\overset
{\theta\rightarrow\pm\infty}{\rightarrow}(-1)^{N-1}b(\pi i-\theta)\left(
\mathbf{1}\,-\frac{i\pi\nu}{\theta}M\right)  _{\bar{\alpha}\beta}^{\delta
\bar{\gamma}}~,~~M_{\bar{\alpha}\beta}^{\delta\bar{\gamma}}=-\mathbf{K}%
_{\bar{\alpha}\beta}^{\delta\bar{\gamma}}%
\]
where $\mathbf{K}_{\bar{\alpha}\beta}^{\delta\bar{\gamma}}%
=\mathbf{C}^{\delta\bar{\gamma}}\mathbf{C}_{\bar{\alpha}\beta}$.
Therefore the formulae (\ref{ta}) and (\ref{tab}) for $W\rightarrow\infty$ are
modified%
\begin{align}
T_{\underline{\alpha}\beta}^{\delta\underline{\gamma}}(\underline{\theta
}+W,\theta)  &  \rightarrow\left(  b^{m}(W)\left(  (-1)^{N-1}b(-W)\right)
\right)  ^{\bar{m}}\left(  \mathbf{1}+X(W)\mathbf{M}\right)  _{\underline
{\alpha}\beta}^{\delta\underline{\gamma}}\label{taSUN}\\
T_{\beta\underline{\alpha}}^{\underline{\gamma}\delta}(\theta,\underline
{\theta}+W)  &  \rightarrow\left(  b^{m}(-W)\left(  (-1)^{N-1}b(W)\right)
\right)  ^{\bar{m}}\left(  \mathbf{1}+X^{\ast}(W)\mathbf{M}\right)
_{\beta\underline{\alpha}}^{\underline{\gamma}\delta} \label{tabSUN}%
\end{align}
where $\underline{\alpha}$ denote $m$ particles and $\bar{m}$ anti-particles
and $X(W)$ is given by (\ref{XONSUN}). If $\beta$ and $\delta$ are particles,
then%
\[
\mathbf{M}\mathbf{=}\sum_{i=1}^{m+\bar{m}}\mathbf{1\dots}M_{i}\mathbf{\dots
1},~~M_{i}=\left\{
\begin{array}
[c]{ll}%
\mathbf{P} & \text{if }i~\text{correspond to a particle}\\
-\mathbf{K} & \text{if }i~\text{correspond to an anti particle.}%
\end{array}
\right.
\]
Again the asymptotic results imply the Lie-algebra generator matrix $M$ in the
vector representation or $\mathbf{M}$ in higher representations.

The minimal form factor function corresponding to $\bar{S}\left(
\theta\right)  $ is \cite{BFK2,BFK3}
\begin{equation}
\bar{F}\left(  \theta\right)  =\exp\int\limits_{0}^{\infty}\frac{dt}%
{t\sinh^{2}t}e^{\frac{t}{N}}\sinh\frac{t}{N}\left(  1-\cosh t\left(
1-\frac{\theta}{i\pi}\right)  \right)  . \label{FSUN}%
\end{equation}

\noindent{\bf The proof Lemma \ref{las} for $SU(N)$:}

Because of the anyonic statistics for $SU(N)$ the proof is modified as follows:

\begin{proof}
Form factor equation (\ref{fiSU})
\begin{multline*}
F_{1\dots n}^{\mathcal{O}}(\theta_{1},\dots,\theta_{n-2},\theta_{n-1}%
,\theta_{n})\\
=F_{n-1,n,1,\dots n-2}^{\mathcal{O}}(\theta_{n-1},\theta_{n},\theta_{1}%
,\dots,\theta_{n-2})\left(  S_{1,n}\dots S_{n-2,n}\right)  \left(
S_{1,n-1}\dots S_{n-2,n-1}\right)
\end{multline*}
and (\ref{fiiiSU})%
\begin{multline*}
\operatorname*{Res}_{\theta_{n-1,n}=i\pi}F_{n-1,n,1,\dots n-2}^{\mathcal{O}%
}(\theta_{n-1},\theta_{n},\theta_{1},\dots,\theta_{n-2})\\
=2i\,\mathbf{C}_{n-1,n}\,F_{1\dots n-2}^{\mathcal{O}}(\theta_{1},\dots
,\theta_{n-2})\left(  \mathbf{1}-\dot{\sigma}_{n}S_{n,n-2}\dots S_{n,1}%
\right)
\end{multline*}
imply%
\begin{multline}
\operatorname*{Res}_{\theta_{n-1,n}=i\pi}F_{1\dots n}^{\mathcal{O}}(\theta
_{1},\dots,\theta_{n-2},\theta_{n-1},\theta_{n})
=2i\,\mathbf{C}_{n-1,n}\,F_{1\dots n-2}^{\mathcal{O}}(\theta_{1},\dots
,\theta_{n-2})\\
\times\left(  (-1)^{\left(  N-1\right)  (n-2)}\mathbf{1}-\dot{\sigma
}_{n}S_{1,n-1}\dots S_{n-2,n-1}\right).\label{iii}%
\end{multline}
It has been used that the modified crossing \cite{BFK1} $S_{\alpha\bar{1}%
}^{\bar{1}\beta}(\theta)=(-1)^{N-1}S_{1\alpha}^{\beta1}(i\pi-\theta)$ and
unitarity implies $\mathbf{C}_{12}\left(  S_{32}\dots S_{n2}\right)  \left(
S_{31}\dots S_{n1}\right)  =(-1)^{\left(  N-1\right)  (n-2)}\mathbf{1}$.

The form factor equation (\ref{fiiSU}) implies $^{\alpha}\langle
\theta|\mathcal{O}|\underline{\theta}\rangle_{\underline{\alpha}}%
=\langle|\mathcal{O}|\underline{\theta},\theta-i\pi\rangle_{\underline{\alpha
}\bar{\alpha}}\mathbf{C}^{\bar{\alpha}\alpha}$. Let $\mathcal{O}$ be a
chargeless operator, $\alpha$ and $\alpha_{r}$ particles and $\underline
{\hat{\alpha}}=\alpha_{1},\dots,\alpha_{r-1}$ a chargeless state with $m$
particles and $m$ anti-particles, then the number of elementary particles of
the state $\underline{\hat{\alpha}}\alpha_{r}\bar{\alpha}$ is
$n=m+m(N-1)+1+(N-1)=N\left(  m+1\right)  $, because the antiparticles are
bound states of $N-1$ elementary particles. For $p=\binom{m}{0}$ or $\theta=0$
the form factor equation (\ref{iii}) implies
\begin{align*}
^{\alpha}\langle\theta|\mathcal{O}(0)|\underline{\theta}\rangle_{\underline
{\alpha}}  &  =F_{\underline{\hat{\alpha}}\alpha_{r}\bar{\alpha}}%
^{\mathcal{O}}(\underline{\theta}_{r-1},\theta_{r},-i\pi)\mathbf{C}%
^{\bar{\alpha}\alpha}\\
&  \overset{\theta_{r}\rightarrow0}{\rightarrow}\frac{2i}{\theta_{r}%
}F_{\underline{\hat{\alpha}}^{\prime}}^{\mathcal{O}}(\underline{\theta}%
_{r-1})\left(  (-1)^{\left(  N-1\right)  (n-2)}\mathbf{1}-\dot{\sigma}%
_{\bar{\alpha}}T\left(  \underline{\hat{\theta}}+W,0\right)  \right)
_{\underline{\hat{\alpha}}\alpha_{r}}^{\alpha\underline{\hat{\alpha}}^{\prime
}}\\
&  \overset{W\rightarrow\infty}{\rightarrow}\frac{2i}{\theta_{r}}%
F_{\underline{\hat{\alpha}}^{\prime}}^{\mathcal{O}}(\underline{\theta}%
_{r-1})\left(  \mathbf{1}-\left(  \mathbf{1}+X(W)\mathbf{M}\right)
_{\underline{\alpha}\beta}^{\delta\underline{\gamma}}\right)  _{\underline
{\hat{\alpha}}\alpha_{r}}^{\alpha\underline{\hat{\alpha}}^{\prime}}%
\end{align*}
It was used that $(-1)^{\left(  N-1\right)  (n-2)}=(-1)^{\left(  N-1\right)
N\left(  m+1\right)  }=1$ and for $m=\bar{m}$ relation (\ref{taSUN}) means%
\begin{multline*}
\dot{\sigma}_{\bar{\alpha}}T_{\underline{\alpha}\beta}^{\delta\underline
{\gamma}}(\underline{\theta}+W,\theta)\rightarrow\dot{\sigma}_{\bar{\alpha}%
}\left(  b^{m}(W)\left(  (-1)^{N-1}b(-W)\right)  \right)  ^{m}\\
\times\left(  \mathbf{1}+X(W)\mathbf{M}\right)  _{\underline{\alpha}\beta
}^{\delta\underline{\gamma}}+O(W^{-2})=\left(  \mathbf{1}+X^{\ast
}(W)\mathbf{M}\right)  _{\underline{\alpha}\beta}^{\delta\underline{\gamma}%
}+O(W^{-2})
\end{multline*}
because by (\ref{sigmadot}) $\dot{\sigma}_{\bar{\alpha}}$ is obtained for the
anti-particle $\bar{\alpha}$ and $n=N\left(  m+1\right)  $ and $Q=0$ as
\[
\dot{\sigma}_{\bar{\alpha}}=\rho^{N-1}=\rho^{-1}=(-1)^{-(N-1)-(1-1/N)\left(
N\left(  m+1\right)  \right)  }=(-1)^{-\left(  N-1\right)  m}%
\]
and
\[
\left(  b(W)\left(  (-1)^{N-1}b(-W)\right)  \right)  ^{m}\overset
{W\rightarrow\infty}{\rightarrow}(-1)^{\left(  N-1\right)  m}+O(W^{-2}).
\]
This proves (\ref{tW1}) for $SU(N)$, similarly (\ref{tW2}) and therefore
(\ref{XX}).
\end{proof}

\section{Sine-Gordon Model}

\label{apSG}

The SG-soliton S-matrix is
\begin{equation}
S(\theta)=\left(
\begin{array}
[c]{cccc}%
S_{ss}^{ss} & 0 & 0 & 0\\
0 & S_{\bar{s}s}^{s\bar{s}} & S_{s\bar{s}}^{s\bar{s}} & 0\\
0 & S_{\bar{s}s}^{\bar{s}s} & S_{s\bar{s}}^{\bar{s}s} & 0\\
0 & 0 & 0 & S_{\bar{s}\bar{s}}^{\bar{s}\bar{s}}%
\end{array}
\right)  =\left(
\begin{array}
[c]{cccc}%
a & 0 & 0 & 0\\
0 & b & c & 0\\
0 & c & b & 0\\
0 & 0 & 0 & a
\end{array}
\right)
\label{SSG}%
\end{equation}
where the amplitudes are given by \cite{Za0}%
\begin{align*}
\frac{b(\theta)}{a(\theta)}  &  =\frac{\sinh\theta/\nu}{\sinh(i\pi-\theta
)/\nu},~~\frac{c(\theta)}{a(\theta)}=\frac{\sinh i\pi/\nu}{\sinh(i\pi
-\theta)/\nu},~~\nu=\frac{\beta^{2}}{8\pi-\beta^{2}}=\frac{\pi}{\pi+2g}\\
a(\theta)  &  =\exp\left(  \int_{0}^{\infty}\frac{1}{t}\,\left(  \frac
{\sinh\frac{1}{2}(1-\nu)t}{\sinh\frac{1}{2}\nu t\,\cosh\frac{1}{2}t}\right)
\sinh t\frac{\theta}{i\pi}dt\right)
\end{align*}

\begin{lemma}
For $\theta\rightarrow\infty$ and $\nu>1$ the S-matrix amplitudes satisfies%
\begin{align*}
a(\pm\theta)  &  =e^{\pm\frac{1}{2}\pi i\left(  1-\frac{1}{\nu}\right)
}\left(  \left(  1+o(1)\right)  \left\{
\begin{array}
[c]{lll}%
\mp i\tan\frac{\pi}{\nu}e^{-2\theta/\nu} & \text{for} & \nu>2\\
\pm2i\cot\frac{1}{2}\pi\nu e^{-\theta} & \text{for} & \nu<2
\end{array}
\right.  \right) \\
b(\pm\theta)  &  =e^{\mp\frac{1}{2}\pi i\left(  1-\frac{1}{\nu}\right)
}+O(e^{-2\theta/\nu})\\
c(\pm\theta)  &  =\pm2ie^{\mp\frac{1}{2}\pi i\left(  1-\frac{1}{\nu}\right)
}\left(  \sin\frac{\pi}{\nu}\right)  e^{-\theta/\nu}+O(e^{-2\theta/\nu})
\end{align*}
Let $\underline{\gamma}=\underline{\gamma_{\bar{s}}\gamma_{s}}$ a chargeless
state with $m$ anti solitons $\underline{\gamma_{\bar{s}}}$ and $m$ solitons
$\underline{\gamma_{s}}$ then%
\begin{gather*}
T_{\underline{\delta}\gamma}^{\alpha\underline{\beta}}(\underline{\theta
}+W,\theta)\overset{W\rightarrow\infty}{\rightarrow}\left(
\mathbf{1\mathbf{+}}X\mathbf{(}W-\theta\mathbf{)M}(\underline{\theta})\right)
_{\underline{\delta}\gamma}^{\alpha\underline{\beta}}+O(e^{-2W/\nu})\\
T_{\alpha\underline{\beta}}^{\underline{\delta}\gamma}(\theta,\underline
{\theta}+W)\overset{W\rightarrow\infty}{\rightarrow}\left(  \mathbf{1+}%
X^{\ast}(W-\theta)\mathbf{M}(\underline{\theta})\right)  _{\alpha
\underline{\beta}}^{\underline{\delta}\gamma}+O(e^{-2W/\nu})
\end{gather*}
with%
\begin{align*}
X(W)  &  =2i\left(  \sin\frac{\pi}{\nu}\right)  e^{-W/\nu}\\
M_{s\bar{s}}^{s\bar{s}}(\theta)  &  =M_{\bar{s}s}^{\bar{s}s}(\theta
)=e^{-\theta/\nu}\,,~M_{\alpha\beta}^{\delta\gamma}=0~\text{else.}
\end{align*}
and therefore%
\begin{gather*}
\mathbf{M}_{\underline{\alpha}\beta}^{\alpha\underline{\gamma}}(\underline
{\theta})=\sum_{i,\,\alpha_{i}\neq\alpha}e^{-\theta_{i}/\nu}\left(  1\dots
P_{i}\dots1\right)  _{\underline{\alpha}\beta}^{\alpha\underline{\gamma}}\\%
\begin{array}
[c]{ccc}%
\left(  1\dots P_{i}\dots1\right)  & = &
\begin{array}
[c]{c}%
\unitlength5mm\begin{picture}(12,2) \put(10,2){\oval(10,2)[lb]} \put(0,0){\oval(10,2)[tr]} \put(1,0){\line(0,1){.7}} \put(4,0){\line(0,1){.7}} \put(1,1.2){\line(0,1){.7}} \put(4,1.2){\line(0,1){.7}} \put(6,0){\line(0,1){.7}} \put(9,0){\line(0,1){.7}} \put(6,1.2){\line(0,1){.7}} \put(9,1.2){\line(0,1){.7}} \end{picture}
\end{array}
\end{array}
\end{gather*}
The small $x$ behavior of the structure function in (\ref{f}) for
Sine-Gordon is given by (\ref{fSG}).
\end{lemma}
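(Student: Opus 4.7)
The proof splits into three stages: establishing the asymptotic behavior of the scalar amplitudes $a(\theta), b(\theta), c(\theta)$; propagating these into the ordered product defining the monodromy matrix $T$; and invoking Theorem~\ref{th} to extract $f(x)$.

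For the amplitudes, I would exploit the integral representation of $a(\theta)$. The non-decaying constant-at-infinity part arises from the $t\to 0$ behavior of the integrand via the asymptotic formula (\ref{as}), exponentiating to the phase $e^{\pm i\pi(1-1/\nu)/2}$. The leading exponentially small correction is captured by shifting the $t$-contour past the nearest pole in the upper half-plane: the pole at $t = 2\pi i/\nu$ coming from $\sinh(\nu t/2)^{-1}$ for $\nu>2$, or the pole at $t=i\pi$ coming from $\cosh(t/2)^{-1}$ for $1<\nu<2$. The residue produces the prefactors $\mp i\tan(\pi/\nu)\,e^{-2\theta/\nu}$ or $\pm 2i\cot(\pi\nu/2)\,e^{-\theta}$, as stated. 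The asymptotics of $b$ and $c$ then follow algebraically from the explicit ratios $b/a$ and $c/a$, which are elementary hyperbolic functions of $\theta/\nu$, expanded as $\theta\to\pm\infty$.

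The heart of the argument is the expansion of $T = S_1(\theta_1+W-\theta)\cdots S_n(\theta_n+W-\theta)$ to first subleading order in $e^{-W/\nu}$. I would split each factor schematically as $S = S_{\text{diag}} + S_{\text{exch}}$, with $S_{\text{diag}}$ collecting the $a,b$ entries (which preserve each particle's type) and $S_{\text{exch}}$ the off-diagonal $c$ entries (which exchange auxiliary and quantum species when they differ). The all-diagonal product accumulates, for each opposite-type scattering, a phase $e^{\mp i\pi(1-1/\nu)/2}$; the chargeless assumption on the quantum state ensures these phases pair up---in direct analogy with the prefactor $(b^m(W)((-1)^{N-1}b(-W)))^{\bar m}$ in the $SU(N)$ derivation---and reduce to the identity on the relevant auxiliary subspace. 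The next-order contribution, at $O(e^{-W/\nu})$, comes from inserting exactly one $c$-factor at some site $i$, contributing $c(\theta_i+W-\theta)P_i \sim 2i\sin(\pi/\nu)\,e^{-(W-\theta)/\nu}\,e^{-\theta_i/\nu}P_i$. Summing over the permitted positions (those with $\alpha_i\neq\alpha$, since $c$ vanishes on same-type pairs) reproduces exactly $X(W-\theta)\mathbf{M}(\underline{\theta})$ with the matrix elements stated in the lemma. All remaining contributions---two or more $c$ insertions, or the $O(e^{-2\theta/\nu})$ refinements of $b$---are absorbed into the $O(e^{-2W/\nu})$ error.

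With $X(W)=2i\sin(\pi/\nu)\,e^{-W/\nu}$ established, the structure-function claim (\ref{fSG}) follows immediately from Theorem~\ref{th}: $f(x)=|X(-\ln x)|^2/x = 4\sin^2(\pi/\nu)\,x^{2/\nu-1}$. The main technical obstacle is the phase accounting in the monodromy expansion: tracking carefully the SG-specific sign conventions (soliton statistics and crossing factors) needed to show that the leading product of $a$ and $b$ phases collapses to the identity on the chargeless sector. This mirrors the role of the statistics factor $\dot\sigma_{\bar\alpha}$ and the auxiliary $(-1)^{N-1}$ factors in the $SU(N)$ proof of Appendix~\ref{apSUN}. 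Once that cancellation is secured, the remaining computation is routine exponential bookkeeping.
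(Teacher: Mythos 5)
Your proposal is correct and follows essentially the same route as the paper: the constant phase of $a$ from the asymptotic formula (\ref{as}), the exponentially small corrections from shifting the $t$-contour past the poles at $t=2\pi i/\nu$ and $t=i\pi$, the asymptotics of $b,c$ from the explicit ratios (the paper phrases this via unitarity and crossing, which is equivalent), and the monodromy expansion as the all-diagonal product (whose $a$- and $b$-phases cancel on the chargeless state) plus a single $c$-insertion giving $X(W-\theta)\mathbf{M}(\underline{\theta})$, with multiple insertions absorbed into $O(e^{-2W/\nu})$. The only minor over-caution is your worry about statistics/crossing sign factors analogous to $\dot\sigma_{\bar\alpha}$ in the $SU(N)$ case: for sine-Gordon these are absent, and the cancellation is just the pairing of the conjugate phases $e^{\pm\frac{1}{2}\pi i(1-1/\nu)}$ of $a$ and $b$.
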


\begin{proof}
The relation $a(\theta)\overset{\theta\rightarrow\infty}{\rightarrow}%
e^{\frac{1}{2}\pi i\left(  1-\frac{1}{\nu}\right)  }$ follows from (\ref{as})
and the term $O(e^{-2\frac{\theta}{\nu}})$ is obtained by writing the integral
as\\
$\int_{0}^{\infty}\dots\sinh t\frac{\theta}{i\pi}dt=-\frac{1}{2}%
\int_{-\infty}^{\infty}\dots e^{-t\frac{\theta}{i\pi}}dt$, shifting the
contour and taking $\operatorname*{Res}\limits_{t=2\pi i/\nu}$:
\begin{align*}
&  \ln a(\theta)=\int_{0}^{\infty}\frac{1}{t}\,\left(  \frac{\sinh\frac{1}%
{2}(1-\nu)t}{\sinh\frac{1}{2}\nu t\,\cosh\frac{1}{2}t}\right)  \sinh
t\frac{\theta}{i\pi}dt\\
&  =-\int_{0}^{\infty}\frac{1}{t}\,\left(  1-\frac{1}{\nu}\right)  \sinh
t\frac{\theta}{i\pi}dt\\
&  -\frac{1}{2}\int_{-\infty}^{\infty}\frac{1}{t}\,\left(  \frac{\sinh\frac
{1}{2}(1-\nu)t}{\sinh\frac{1}{2}\nu t\,\cosh\frac{1}{2}t}+\left(  1-\frac
{1}{\nu}\right)  \right)  e^{-t\frac{\theta}{i\pi}}dt
~\overset{\theta\rightarrow\infty}{\rightarrow}~\frac{1}{2}i\pi
\left(  1-\frac{1}{\nu}\right)  \\
&  -2\pi i\left(  \operatorname*{Res}_{t=2\pi i/\nu}+\operatorname*{Res}%
_{t=i\pi}+\dots\right)  \frac{1}{2}\,\frac{1}{t}\left(  \frac{\sinh\frac{1}%
{2}(1-\nu)t}{\sinh\frac{1}{2}\nu t\,\cosh\frac{1}{2}t}+\left(  1-\frac{1}{\nu
}\right)  \right)  e^{-t\frac{\theta}{i\pi}}\\
&  =\frac{1}{2}i\pi\left(  1-\frac{1}{\nu}\right)  -i\tan\frac{\pi}{\nu
}e^{-\frac{2}{\nu}\theta}+2i\cot\frac{1}{2}\nu\pi e^{-\theta}+\dots
\end{align*}
Unitarity $a(-\theta)a(\theta)=1$, crossing $a(i\pi-\theta)=b(\theta
),~c(i\pi-\theta)=c(\theta)$ and $c(\theta)/a(\theta)=\sinh
(i\pi/\nu)/\sinh(i\pi-\theta)/\nu$ prove the other relations.

Let $\alpha=\gamma=s$ soliton and $\underline{\gamma}=\underline{\gamma
_{\bar{s}}\gamma_{s}}$ a chargeless state with $m$ anti-solitons
$\underline{\gamma_{\bar{s}}}$ and $m$ solitons $\underline{\gamma_{s}}$ then
\begin{equation}
T_{\underline{\delta}s}^{s\underline{\beta}}(\underline{\theta}+W,\theta
)\rightarrow\mathbf{1}_{\underline{\delta}}^{\underline{\beta}}+O(e^{-2W/\nu})
\label{ta1}%
\end{equation}
because $%
{\textstyle\prod\nolimits_{j=1}^{m}}
b(\theta_{j}+W-\theta)%
{\textstyle\prod\nolimits_{j=m+1}^{2m}}
a(\theta_{j}+W-\theta)\overset{W\rightarrow\infty}{\rightarrow}=1$
and other terms with $c$ can only appear in pairs
as $c(\theta_{j}+W-\theta)c(\theta_{k}+W-\theta)=O(e^{-2W/\nu})$.

Let $\alpha=s,~\gamma=\bar{s}~$ and $\underline{\gamma}=\underline
{\gamma_{\bar{s}}\gamma_{s}}$ a chargeless state with $m$ anti solitons
$\underline{\gamma_{\bar{s}}}$ and $m$ solitons $\underline{\gamma_{s}}$ then
\begin{multline}
T_{\underline{\alpha}s}^{s\underline{\gamma}}(\underline{\theta}%
+W,\theta)\overset{W\rightarrow\infty}{\rightarrow}\left(
{\textstyle\prod\nolimits_{j=1}^{m}}
b(\theta_{j}+W-\theta)\right)  \\
\times\left(  \sum_{i=1}^{m}\frac{c(\theta_{i}+W-\theta)}{b(\theta
_{i}+W-\theta)}\left(  1\dots P_{i}\dots1\right)  _{\underline{\alpha}%
s}^{s\underline{\gamma}}\right)  \left(  \prod_{j=m+1}^{2m}a(\theta
_{j}+W-\theta)\right)  +O(e^{-2W/\nu})\\
\overset{W\rightarrow\infty}{\rightarrow}2i\left(  \sin\frac{\pi}{\nu}\right)
e^{-\left(  W-\theta\right)  /\nu}\sum_{i=1}^{m}e^{-\theta_{i}/\nu}\left(
1\dots P_{i}\dots1\right)  _{\underline{\alpha}s}^{s\underline{\gamma}%
}+O(e^{-2W/\nu})\label{ta2}%
\end{multline}
because $c(\theta)/b(\theta)\overset{\theta\rightarrow\infty}{\rightarrow
}2i\left(  \sin\frac{\pi}{\nu}\right)  e^{-\theta/\nu}+O(e^{-2\theta/\nu})$.
The results (\ref{ta1}) and (\ref{ta2}) prove (\ref{ta}). Similarly one proves
(\ref{tab}).
\end{proof}

The minimal form factor function corresponding to $S_{-}(\theta)=b(\theta
)-c(\theta)$ is \cite{KW}
\begin{align}
F_{-}(\theta)  &  =\frac{\cosh\frac{1}{2}(i\pi-\theta)}{\cosh\frac{1}{2}%
(i\pi-\theta)/\nu}\,F(\theta)\nonumber\\
F(\theta)  &  =\exp\int_{0}^{\infty}\frac{dt}{t}\frac{\sinh\frac{1}{2}%
(1-\nu)t}{\sinh\frac{1}{2}\nu t\,\cosh\frac{1}{2}t}\frac{1-\cosh
t(1-\theta/(i\pi))}{2\sinh t} \label{FSG}%
\end{align}

The proof of Lemma \ref{las} for Sine-Gordon is similar to to that one for
$O(N)$ in Appendix \ref{apON}.

\section{Sinh-Gordon Model}

\label{apsinhG}

The amplitude $S(\theta)$ of (\ref{SSH}) behaves for $\theta\rightarrow
\pm\infty$ as
\[
S(\theta)\rightarrow1\pm4e^{\mp\theta}i\sin\pi\nu
\]
and the monodromy matrix for $W\rightarrow\infty$%
\[
T(\underline{\theta}+W,\theta)\overset{W\rightarrow\infty}{\rightarrow
}1+X(\underline{\theta}-\theta,W),~X(\underline{\theta},W)=e^{-W}4i\sin\pi
\nu\sum_{i=1}^{n}e^{-\theta_{i}}%
\]
Therefore%
\begin{equation}
X(W)=\left(  4i\sin\pi\nu\right)  e^{-W},~M(\underline{\theta})=\sum_{i=1}%
^{n}e^{-\theta_{i}} \label{Msh}%
\end{equation}

The proof of Lemma \ref{las} for Sinh-Gordon is similar to to that one for
$O(N)$ in Appendix \ref{apON}.

\section{Z(N) Model}

\label{apZN}

The amplitudes of (\ref{SZN}) behave for $\theta\rightarrow\pm\infty$ as
\[
S(\theta)\overset{\theta\rightarrow\pm\infty}{\rightarrow}e^{\pm i\eta}\left(
1\pm e^{\mp\theta}2i\sin\eta\right)  ,~\bar{S}(\theta)\overset{\theta
\rightarrow\pm\infty}{\rightarrow}e^{\mp i\eta}\left(  1\pm e^{\mp\theta
}2i\sin\eta\right)
\]
and the mnodromy-matrices for $W\rightarrow\infty$%
\[
T_{\bar{1}1,1}^{1,\bar{1}1}(\underline{\theta}+W,0)\overset{W\rightarrow
\infty}{\rightarrow}1+X(W)M(\underline{\theta}),~T_{1,\bar{1}1}^{\bar{1}%
1,1}(0,\underline{\theta}+W)\overset{W\rightarrow\infty}{\rightarrow}%
1+X^{\ast}(W)M(\underline{\theta})
\]
with%
\begin{equation}
X(W)=\left(  2i\sin\eta\right)  e^{-W},~M(\underline{\theta})=\sum_{i=1}%
^{n}e^{-\theta_{i}} \label{XMZN}%
\end{equation}
The minimal form factor function corresponding to $\bar{S}(\theta)$ is%
\begin{equation}
\bar{F}(\theta)=\exp\int_{0}^{\infty}\frac{dt}{t}\,\frac{\sinh t\frac{2}{N}%
}{\sinh^{2}t}\left(  1-\cosh t\left(  1-\frac{\theta}{i\pi}\right)  \right)
\label{FZN}%
\end{equation}

The proof of Lemma \ref{las} for $Z(N)$ is similar to to that one for $SU(N)$
in Appendix \ref{apSUN}.

\end{document}